\documentclass[aps,prl,preprint,notitlepage,groupedaddress,floatfix]{revtex4-2}

\usepackage{amsmath,amssymb,bm}
\usepackage{graphicx}
\usepackage{dcolumn}
\usepackage{amsthm}
\theoremstyle{definition}  
\newtheorem{lemma}{Lemma}[section]
\newtheorem{theorem}{Theorem}[section]

\usepackage[utf8]{inputenc}
\usepackage{CJKutf8}
\newcommand{\zhname}[1]{\mbox{\begin{CJK}{UTF8}{gbsn}#1\end{CJK}}}

\newcommand{\jj}{\mathbf{j}}
\newcommand{\kk}{\mathbf{k}}

\numberwithin{equation}{section}

\newcommand{\dd}{\mathrm{d}}          
\newcommand{\ii}{\mathrm{i}}          
\newcommand{\ee}{\mathrm{e}}          

\begin{document}
	
	\title{Three-Dimensional Coulomb Discrete Spectrum via Symplectic Geometry and Phase-Space Constraints: A Geometric Framework for Condensed-Matter Coulomb-Like Systems}
	
	\author{Gang Zheng \zhname{(郑罡)}$^{*}$}
	\author{Peng Chen \zhname{(陈鹏)}}
	\author{Mengli Wang \zhname{(王梦丽)}}
    \author{Wenqi Xue \zhname{(薛文琪)}}
    \author{Benniu Zhang \zhname{(张奔牛)}$^{*}$}
    \affiliation{%
	Chongqing Jiaotong University, No.66 Xuefu Avenue, Nan'an District, Chongqing, 400074, China \\
	\footnotesize $^{*}$\,Both authors are corresponding authors: zhenggang@cqjtu.edu.cn, benniuzhang@cqjtu.edu.cn
    }
	
	\begin{abstract}
Discrete energy-level structures in Coulomb-like localized bound states are conventionally obtained by direct diagonalization of the Schr\"odinger equation or by semiclassical approximations that fail at low energies. Here we develop a geometric phase-space framework that recovers the exact discrete energy levels and shell degeneracy of three-dimensional Coulomb bound states from classical symplectic geometry and global boundary constraints, bypassing explicit operator diagonalization. Using Kustaanheimo--Stiefel regularization, we map negative-energy Kepler orbits to four-dimensional isotropic harmonic oscillators. The group composition law of the symplectic flow determines an integral kernel built from the classical two-point action and Van Vleck amplitude, whose infinitesimal generator satisfies an exact linear evolution equation---a property we term \textit{quadratic closure}. Imposing three geometric constraints---decay at infinity, regularity at the origin, and fiber invariance under the Hopf fibration---we recover the exact $E_n/E_1=1/n^2$ scaling and $g_n=n^2$ degeneracy. The phase scale $\alpha$ sets the action unit; all spectral structural features are independent of its value. This framework offers a symmetry-transparent route to spectral data: for quadratic Hamiltonians, the classical kernel incurs zero semiclassical error even at ground-state energies, and the geometric constraints directly identify symmetry-protected degeneracies without diagonalization. We apply the method to shallow impurity states, moir\'e superlattices, and semiclassical transport, demonstrating how geometric screening rules simplify spectral analysis in condensed-matter environments.\\
		
		\textbf{keywords:} Three-dimensional Coulomb problem; Kustaanheimo--Stiefel regularization; quadratic closure; symplectic geometry; bound-state spectrum
	\end{abstract}

	\maketitle
	\newpage

	
	\section{Introduction}\label{sec:intro}
	
	Discrete bound-state spectra lie at the heart of condensed-matter spectroscopy. From shallow donor and acceptor states in bulk semiconductors~\cite{Ramdas1981,Kohn1955} and gate-defined quantum dots~\cite{Kastner1993,Davari2020} to localized excitons in moir\'e superlattices of twisted two-dimensional materials~\cite{TramblydeLaissardiere2010,Hu2023,He2024}, a vast array of Coulomb-like confined systems exhibit universal hydrogen-like shell structures. Conventionally, these spectral features are extracted by solving the Schr\"odinger equation for each specific system---a procedure that, while exact for simple potentials, becomes computationally expensive for spatially varying or screened Coulomb potentials and often obscures the geometric origin of spectral universality.
	
	Semiclassical methods such as WKB and EBK are workhorses of condensed-matter spectroscopy and transport theory, but they are fundamentally short-wavelength approximations that diverge from exact results for ground and low-lying excited states~\cite{Berry1972}, limiting their predictive power in the deeply confined regime. The correspondence between classical phase-space geometry and quantum spectra has long been a central theme in mathematical physics. Since the old quantum theory of Bohr and Sommerfeld, action variables of integrable systems have been known to underpin quantum spectral sequences~\cite{Sommerfeld1919}. However, for Coulomb-like systems, a direct geometric construction that yields exact discrete spectra without invoking asymptotic approximations has remained elusive.
	
	Mathematically, the Kustaanheimo--Stiefel (KS) transformation establishes an exact mapping between the Coulomb problem and the harmonic oscillator. A celebrated application is the Duru--Kleinert method~\cite{Duru1979}, which employs KS regularization and time reparameterization to evaluate the Feynman path integral for the hydrogen atom. That framework operates within the quantum-mechanical path-integral formalism, taking the superposition principle as a first principle and using the KS map as a technical tool for computing quantum propagators.
	
	As a maximally superintegrable system, the three-dimensional Coulomb problem possesses an exceptionally rich phase-space geometry, endowed with hidden SO(4) symmetry via the conserved Runge--Lenz vector~\cite{Goldstein2002,Cordani2003}. In this work, we demonstrate that the energy-level scaling law and shell degeneracy of this system can be recovered as rigorous consequences of classical phase-space geometry and global boundary constraints, providing a \textbf{geometrically transparent alternative} to standard quantization. Starting from Newtonian dynamics, we map the three-dimensional Coulomb problem to a four-dimensional isotropic harmonic oscillator through classical KS regularization. From the group composition law of the symplectic flow in image space, we construct an integral kernel consisting of the classical two-point action and the Van Vleck amplitude. This kernel satisfies a linear evolution equation exactly---a property we call \textit{quadratic closure}---with no semiclassical approximation involved. Finally, three geometrically motivated constraints---decay at infinity, regularity at the origin, and fiber invariance under the Hopf fibration---naturally select a discrete sequence of bound-state energy levels.
	
	The resulting framework is mathematically isomorphic to the standard quantum spectral problem for quadratic Hamiltonians, yet the derivation proceeds entirely through classical phase-space constructs. This isomorphism is not a limitation but a strength: it guarantees that the geometric framework reproduces all exact quantum spectral data for Coulomb-like systems, while offering \textbf{conceptual transparency} (spectral features are tied to identifiable geometric constraints) and \textbf{computational efficiency} (the Van Vleck kernel is exact for quadratic Hamiltonians, eliminating the need for $\hbar$-expansion corrections at low energies). We generalize the framework to three canonical condensed-matter scenarios---shallow impurity states under the effective-mass approximation, localized potentials in moir\'e superlattices, and semiclassical transport---quantifying validity criteria and clarifying the geometric origin of widely observed spectral universality.
	
	The paper is organized as follows. Section~\ref{sec:regularization} establishes the classical regularization framework and derives the image-space harmonic oscillator. Section~\ref{sec:kernel} constructs the symplectic flow integral kernel and presents the core quadratic closure property. Section~\ref{sec:bvp} formulates the boundary value problem and derives the discrete spectral condition. Section~\ref{sec:degeneracy} analyzes shell degeneracy from the perspective of spatial symmetry and attitude-space geometry. Section~\ref{sec:equivalence} presents the structural correspondence between classical orbital data and field-theoretic spectra. Section~\ref{sec:applications} applies the framework to typical condensed-matter Coulomb-like systems, emphasizing computational advantages and predictive power. Section~\ref{sec:discussion} discusses the relation to existing theories, the broader implications for condensed matter physics, and the scope of the framework. Section~\ref{sec:conclusion} concludes. All formal lemmas, theorems and their rigorous proofs are collected in Appendix~\ref{app:proofs}.

	\section{Classical Regularization and Image-Space Harmonic Oscillator}\label{sec:regularization}
	
	This section establishes the complete mathematical foundation for the classical regularization of the three-dimensional Coulomb problem. We first unify the notation and conventions, then formulate the physical-space dynamics in state-space form, introduce the intrinsic time reparameterization, and finally construct the image-space harmonic oscillator via the KS squaring map. All constructions in this section are strictly based on classical analytical mechanics.
	
	\subsection{Notation, Units and Conventions}\label{subsec:2.1}
	
	The full list of core physical quantities, their physical meanings and dimensional definitions are systematically compiled in Appendix~\ref{app:symbols} for readers' reference. Here we specify the global conventions, terminology rules and dimensionless reduction scheme adopted throughout the derivation.
	
	Throughout this work, the angular bracket $\langle a,b\rangle$ denotes the real inner product of quaternions, i.e., $\langle a,b\rangle=\mathrm{Re}(a^*b)$, which reduces to the standard Euclidean dot product for purely imaginary quaternions. Function-space inner products are denoted by $(\cdot,\cdot)_{L^2}$ to avoid ambiguity.
	
	We adopt the following global conventions for all physical parameters:
	\begin{equation}\label{eq:2.1}
		M>0,\quad C>0,\quad u_0>0,\quad \alpha>0.
	\end{equation}
	
	Attractive Coulomb coupling corresponds to $C>0$, hence the regularized energy parameter satisfies $\varepsilon=4C/u_0>0$. Shell indices, energy ratios, winding numbers and attitude angles are all dimensionless by definition. Positions, times, actions, fields and evolution operators retain physical dimensions unless otherwise stated.
	
	For dimensionless reduction performed per energy shell, we define the scaled variables:
	\begin{equation}\label{eq:2.2}
		\tilde{u}=u\sqrt{\frac{M\omega_0}{\alpha}},\quad \tilde{\tau}=\omega_0\tau,\quad \tilde{p}_u=\frac{p_u}{\sqrt{M\omega_0\alpha}}.
	\end{equation}
	
	All quantities marked with a tilde are dimensionless. In these variables, the image-space Hamiltonian reduces to the parameter-free form
	\begin{equation}\label{eq:2.3}
		\tilde{H}_u=\frac{|\tilde{p}_u|^2}{2}+\frac{1}{2}|\tilde{u}|^2,
	\end{equation}
	with the constant-energy relation $\tilde{H}_u=\tilde{\varepsilon}$, where $\tilde{\varepsilon}=\varepsilon/(\alpha\omega_0)$. The phase scale $\alpha$ only appears when restoring physical dimensions, playing a role analogous to the speed of light $c$ in special relativity: its numerical value depends on the choice of units and does not alter any structural prediction of the theory.

	\subsection{Physical-Space Dynamics and Time Reparameterization}\label{subsec:2.2}
	
	The physical configuration space is the space of purely imaginary quaternions:
	\begin{equation}\label{eq:2.4}
		w=\ii x+\jj y+\kk z\in\mathrm{Im}\,\mathbb{H},\quad r=|w|=\sqrt{x^2+y^2+z^2}.
	\end{equation}
	
	Newton's equations of motion for a particle of mass $M$ in an attractive Coulomb potential read:
	\begin{equation}\label{eq:2.5}
		M\ddot{w}=-\frac{C}{r^3}w.
	\end{equation}
	
	To facilitate subsequent regularization, we rewrite the second-order Newton equation as a first-order state-space system. With state variables $\xi=(w,p)\in(\mathrm{Im}\,\mathbb{H})^2\cong\mathbb{R}^6$, the first-order system is:
	\begin{equation}\label{eq:2.6}
		\frac{dw}{dt}=\frac{p}{M},\qquad \frac{dp}{dt}=-\frac{C}{r^3}w.
	\end{equation}
	
	The Coulomb potential suffers from a collision singularity at $r=0$, which appears as a $1/r^3$ divergence in the force term. To soften this singularity, we introduce an intrinsic time parameter $\tau$ defined by:
	\begin{equation}\label{eq:2.7}
		\frac{dt}{dd\tau}=\frac{2r}{u_0},\qquad u_0>0,\quad [u_0]=L,\quad [\tau]=T.
	\end{equation}
	
	Forward evolution $\tau>0$ corresponds to $t>0$. Here $u_0$ is a regularization length scale; it is not a physical observable but a coordinate-scale parameter analogous to a gauge-fixing constant. As shown below, all physical predictions (energy ratios and degeneracies) are independent of $u_0$, which appears only in intermediate constructions and drops out of final physical formulas. Under this reparameterization, primes denote $d/d\tau$, and the equations of motion become:
	\begin{equation}\label{eq:2.8}
		w'=\frac{2r}{Mu_0}\,p,\qquad p'=-\frac{2C}{u_0r}\,\hat{w},
	\end{equation}
	where $\hat{w}=w/r$ is the unit position vector. The singularity is softened from $r^{-3}$ to $r^{-1}$, and the factor $r^{-1}$ along collision trajectories is integrable. Full removal of the collision singularity is achieved after passing to the image space, as shown in Section~\ref{subsec:2.4}.
	
	The Coulomb system possesses three sets of conserved quantities. First, the total energy:
	\begin{equation}\label{eq:2.9}
		H_w(w,p)=\frac{|p|^2}{2M}-\frac{C}{r}=E.
	\end{equation}
	
	Second, the angular momentum:
	\begin{equation}\label{eq:2.10}
		L=\frac{1}{2}\left(wp-pw\right)\in\mathrm{Im}\,\mathbb{H}.
	\end{equation}
	
	Third, the Runge--Lenz vector, which reflects the hidden SO(4) symmetry of the Coulomb problem~\cite{Goldstein2002,Cordani2003}:
	\begin{equation}\label{eq:2.11}
		A=\frac{|p|^2}{M}w-\frac{\langle w,p\rangle}{M}p-\frac{C}{r}w\in\mathrm{Im}\,\mathbb{H}.
	\end{equation}
	
	For negative energies, we define the normalized Runge--Lenz vector:
	\begin{equation}\label{eq:2.12}
		A'=\sqrt{-\frac{M}{2E}}\,A\in\mathrm{Im}\,\mathbb{H}.
	\end{equation}
	
	The generators
	\begin{equation}\label{eq:2.13}
		J_\pm=\frac{1}{2}\left(L\pm A'\right)
	\end{equation}
	satisfy $[J_+,J_-]=0$, forming the SO(4) algebra characteristic of the bound Kepler problem.
	
	The attitude vector is defined as the unit normal to the orbital plane:
	\begin{equation}\label{eq:2.14}
		\hat{n}=\frac{L}{|L|}\in S^2\subset\mathrm{Im}\,\mathbb{H}.
	\end{equation}
	
	For any single orbit, $\hat{n}$ is conserved, and motion automatically satisfies $w\cdot\hat{n}=0$ and $p\cdot\hat{n}=0$; the dynamics is essentially planar. When $L=0$, the attitude vector is undefined, corresponding to a degenerate elliptic orbit (a straight line through the force center). This limiting case is treated uniformly at the field level in later sections.

	\subsection{KS Squaring Map and Symplectic Structure}\label{subsec:2.3}
	
	The Kustaanheimo--Stiefel (KS) squaring map~\cite{Kustaanheimo1965,Stiefel1971} establishes the canonical correspondence between physical space and image space, and provides an exact regularization of the Coulomb collision singularity. Given a physical attitude $\hat{n}$, we define the image-space coordinate $u\in\mathbb{H}$ by the squaring map:
	\begin{equation}\label{eq:2.15}
		w=\frac{u\,\hat{n}\,u^*}{u_0},\qquad r=\frac{|u|^2}{u_0}=\frac{\rho^2}{u_0}.
	\end{equation}
	
	The direction $\hat{n}$ is determined by the physical conserved angular momentum $L$. This mapping has a clear geometric meaning: each point in physical space corresponds to an entire circle (fiber) in the image space, and all points on the fiber project to the same physical point via the squaring map.
	
	The differential relation of the mapping is:
	\begin{equation}\label{eq:2.16}
		dw=\frac{1}{u_0}\left(du\,\hat{n}\,u^*+u\,\hat{n}\,du^*\right).
	\end{equation}
	
	On the fiber orthogonal complement (relative to $\hat{n}$), all three singular values of the differential equal $2\rho/u_0$, and the fiber direction belongs to the kernel. Therefore:
	\begin{equation}\label{eq:2.17}
		|dw|^2=\frac{4\rho^2}{u_0^2}|du|^2 \quad (\text{fiber orthogonal complement}).
	\end{equation}
	
	The momentum transformation preserving the symplectic one-form $\mathrm{Re}(p_w^*\,dw)=\mathrm{Re}(p_u^*\,du)$ is:
	\begin{equation}\label{eq:2.18}
		p_u=\frac{2}{u_0}\,p_w\,u\,\hat{n}.
	\end{equation}
	
	By multiplicativity of the norm:
	\begin{equation}\label{eq:2.19}
		|p_u|^2=\frac{4\rho^2}{u_0^2}|p_w|^2=\lambda\,|p_w|^2,\qquad \lambda=\frac{4r}{u_0}.
	\end{equation}
	This guarantees that the mapping is a canonical transformation on the reduced phase space.
	
	The fiber circular action $u\mapsto u\,e^{\hat{n}\theta}$ ($\theta\in[0,2\pi)$) leaves the physical position $w$ unchanged. The moment map for this action is:
	\begin{equation}\label{eq:2.20}
		\mu(u,p_u)=\langle p_u,u\,\hat{n}\rangle=\mathrm{Re}\left(p_u^*\,u\,\hat{n}\right).
	\end{equation}
	
	The physical phase space is the Marsden--Weinstein reduced space $\mu^{-1}(0)/S^1$~\cite{Marsden1974}. In what follows, we work on the constraint surface $\mu=0$, where the mapping between physical and image phase spaces is well-defined.

	\subsection{Regularized Image-Space Harmonic Oscillator}\label{subsec:2.4}
	
	We now derive the core equivalence relation: on the constraint surface $\mu=0$ and negative-energy shell $H_w=E<0$, the physical Coulomb problem is exactly equivalent to a four-dimensional isotropic harmonic oscillator in the image space.
	
	Multiply the physical energy equation by the Jacobian factor $\lambda=4r/u_0$:
	\begin{equation}\label{eq:2.21}
		\lambda\left(\frac{|p_w|^2}{2M}-\frac{C}{r}-E\right)=0.
	\end{equation}
	
	Substitute $|p_u|^2=\lambda|p_w|^2$ and $\lambda r=4r^2/u_0=4|u|^4/u_0^3$ into the equation:
	\begin{equation*}
		\frac{|p_u|^2}{2M}-\frac{4C}{u_0}-\frac{4rE}{u_0}=0.
	\end{equation*}
	
	Using $r=\rho^2/u_0$, the last term becomes $4E\rho^2/u_0^2$. Rearranging terms:
	\begin{equation*}
		\frac{|p_u|^2}{2M}-\frac{4E}{u_0^2}\rho^2=\frac{4C}{u_0}.
	\end{equation*}
	
	For negative energies $E<0$, the coefficient of $\rho^2$ is positive. Define the image-space frequency:
	\begin{equation}\label{eq:2.22}
		\omega_0^2=-\frac{8E}{Mu_0^2},
	\end{equation}
	and the regularized energy constant:
	\begin{equation}\label{eq:2.23}
		\varepsilon=\frac{4C}{u_0}.
	\end{equation}
	
	The equation then takes the standard harmonic oscillator form:
	\begin{equation}\label{eq:2.24}
		\frac{|p_u|^2}{2M}+\frac{1}{2}M\omega_0^2|u|^2=\varepsilon.
	\end{equation}
	
	Defining the image-space Hamiltonian $H_u=|p_u|^2/(2M)+\frac{1}{2}M\omega_0^2|u|^2$, we have established that the physical negative-energy shell is equivalent to $H_u=\varepsilon$.
	
	This derivation uses only energy conservation and the geometry of the squaring map. The image-space equation contains no singular coefficient, and regularization is completed at this step. This is a strict theorem of classical analytical mechanics: bounded Kepler motion at fixed negative energy is geometrically equivalent to the classical dynamics of a four-dimensional harmonic oscillator in intrinsic time $\tau$. Note that $u_0$ appears only as an intermediate scale parameter; as shown in Section~\ref{subsec:4.4}, it drops out of all physical spectral predictions.

	\section{Symplectic Flow Integral Kernel and Quadratic Closure}\label{sec:kernel}
	
	In this section, we construct the integral kernel representing the image-space symplectic flow on configuration space, motivate its uniqueness from the group composition law, and establish the core property of quadratic closure: the classical kernel exactly satisfies a linear evolution equation without any semiclassical approximation. Rigorous mathematical proofs of the formal statements in this section are provided in Appendix~\ref{app:proofs}.
	
	We use complex exponential notation as a compact mathematical representation of the real symplectic flow; no quantum interpretation is implied. The full real symplectic formulation is detailed in Section~\ref{subsec:3.4}.
	
	\subsection{From Classical Orbits to Configuration-Space Kernel: Motivation and Construction}\label{subsec:3.1}
	
	Section~\ref{sec:regularization} established that the classical Coulomb dynamics on a fixed negative-energy shell is equivalent to a four-dimensional harmonic oscillator in image space. The finite-dimensional symplectic flow $\phi_\tau$ generated by $H_u$ is a one-parameter group of diffeomorphisms on the eight-dimensional phase space $(u,p_u)\in\mathbb{R}^8$. While this flow completely describes the evolution of individual classical orbits, extracting the \textbf{set of all bound-state parameters} directly from the finite-dimensional flow is cumbersome: one would need to examine all possible initial conditions and identify which yield physically admissible motion.
	
	To overcome this, we seek a representation of the symplectic flow as a linear evolution on the space of scalar configurations over image-space coordinates. Such a representation allows us to impose global geometric constraints---decay, regularity, and fiber invariance---as boundary conditions on a linear partial differential equation, thereby systematically screening the admissible bound-state spectrum. The central object enabling this transition is the \textbf{integral kernel} $K(u,\tau;v)$, which propagates a configuration from point $v$ to point $u$ in intrinsic time $\tau$.
	
	For the quadratic Hamiltonian $H_u$, this kernel is uniquely determined by three conditions, each rooted in classical mechanics:
	
	1. \textbf{Initial condition}: As $\tau\to 0^+$, the kernel reduces to the identity,
	\begin{equation}\label{eq:3.1}
		\lim_{\tau\to 0^+} K(u,\tau;v)=\delta(u-v),
	\end{equation}
	reflecting the fact that zero-time evolution leaves every classical orbit unchanged.
	
	2. \textbf{Group composition law}: Consecutive propagation composes by convolution, with total time equal to the sum of individual times,
	\begin{equation}\label{eq:3.2}
		\int_{\mathbb{R}^4} K(u,\tau_2;y)\,K(y,\tau_1;v)\,\dd^4 y = K(u,\tau_1+\tau_2;v),
	\end{equation}
	for $0<\omega_0\tau_1,\,\omega_0\tau_2,\,\omega_0(\tau_1+\tau_2)<\pi$. This is the functional transcription of the classical symplectic flow's one-parameter group property: $\phi_{\tau_2}\circ\phi_{\tau_1}=\phi_{\tau_1+\tau_2}$.
	
	3. \textbf{Smooth polar decomposition with quadratic phase}: The kernel admits a representation $K=R\,ee^{ii\Theta}$ with $R>0$ and $\Theta\in\mathbb{R}$, where the phase $\Theta$ is a smooth quadratic function of the endpoint coordinates and the amplitude $R$ depends only on $\tau$. This condition is not an external ansatz but a consequence of the classical Hamilton--Jacobi structure: for a quadratic Hamiltonian, the generating function of the canonical flow is itself a quadratic form in the endpoint variables. Thus the phase $\Theta$ is compelled to be proportional to the classical two-point action $S$.
	
	The time-interval restriction in condition 2 avoids caustics in the position-to-position map; importantly, this restriction affects only the real-time propagator expression and has no bearing on the spectral conclusions derived later, which follow from Sturm--Liouville theory and the Mehler expansion (see Lemma~\ref{lem:mehler} and Section~\ref{sec:bvp}).
	
	The unique smooth solution satisfying the short-time asymptotic behavior
	\begin{equation*}
		\lim_{\tau\to0^+}\Theta(u,\tau;v)\sim\frac{M|u-v|^2}{2\alpha\tau}
	\end{equation*}
	is proportional to the classical two-point action:
	\begin{equation*}
		\Theta(u,\tau;v)=\frac{1}{\alpha}S(u,\tau;v).
	\end{equation*}
	
	The proportionality constant $\alpha$ sets the correspondence scale between symplectic area and phase period.
	
	With this phase, the composition integral is a pure Gaussian integral over the intermediate variable $y$. Since the phase is quadratic in $y$, the integral is exact and involves no stationary-phase approximation. The amplitude factor satisfies the composition law if and only if the amplitude $R$ is independent of the endpoint variables and depends only on $\tau$:
	\begin{equation*}
		R(u,\tau;v)=F(\tau).
	\end{equation*}
	
	Thus the three classical conditions admit the unique solution
	\begin{equation}\label{eq:3.3}
		K_\alpha(u,\tau;v)=F(\tau)\,\ee^{\ii S(u,\tau;v)/\alpha}.
	\end{equation}
	
	This kernel is not an independent infinite-dimensional object, but an integral-kernel representation of the finite-dimensional symplectic flow, uniquely determined by the group composition law and the quadratic structure of $H_u$.

	\subsection{Explicit Form of the Two-Point Action and Van Vleck Amplitude}\label{subsec:3.2}
	
	For the harmonic oscillator Hamiltonian $H_u$, the classical two-point action from $v$ to $u$ in time $\tau$ is
	\begin{equation}\label{eq:3.4}
		S(u,\tau;v)=\frac{M\omega_0}{2\sin(\omega_0\tau)}
		\left[(|u|^2+|v|^2)\cos(\omega_0\tau)-2\,\mathrm{Re}(u^*v)\right].
	\end{equation}
	
	This is the generating function of the canonical transformation from $(v,p_v)$ to $(u,p_u)$.
	
	The mixed Hessian determinant is
	\begin{equation}\label{eq:3.5}
		\det\left(-\frac{\partial^2 S}{\partial u_a\,\partial v_b}\right)=\left(\frac{M\omega_0}{\sin(\omega_0\tau)}\right)^4,
	\end{equation}
and the normalized Van Vleck amplitude~\cite{VanVleck1928} is
\begin{equation}\label{eq:3.6}
	F(\tau)=\left(\frac{M\omega_0}{2\pi\alpha\sin(\omega_0\tau)}\right)^2.
\end{equation}
	
	The prefactor is fixed by the delta initial condition (\ref{eq:3.1}): in the short-time limit, the kernel must reduce to the four-dimensional free-particle heat kernel, whose normalization follows from the standard Gaussian representation of the delta function. Importantly, $F$ depends only on $\tau$ and not on the endpoint coordinates, so $\nabla_u^2 F=0$.
	
	In the dimensionless variables of Section~\ref{subsec:2.1}, these expressions become parameter-free:
	\begin{equation}\label{eq:3.7}
		\tilde{S}(\tilde{u},\tilde{\tau};\tilde{v})=\frac{1}{2\sin\tilde{\tau}}
		\left[(|\tilde{u}|^2+|\tilde{v}|^2)\cos\tilde{\tau}-2\,\mathrm{Re}(\tilde{u}^*\tilde{v})\right],
	\end{equation}
	and
	\begin{equation}\label{eq:3.8}
		\tilde{F}(\tilde{\tau})=\left(\frac{1}{2\pi\sin\tilde{\tau}}\right)^2.
	\end{equation}

	\subsection{Quadratic Closure Property}\label{subsec:3.3}
	
	The central result of this section is that the classical integral kernel $K_\alpha=F e^{\ii S/\alpha}$ constructed above exactly satisfies a linear evolution equation. We term this exact property \textbf{quadratic closure}: for quadratic Hamiltonians, the classical integral kernel built from the Hamilton--Jacobi action and Van Vleck amplitude is an exact solution to the linear evolution equation, requiring no semiclassical approximation whatsoever.
	
	Specifically, the kernel obeys the equation:
	\begin{equation}\label{eq:3.9}
		\ii\alpha\,\partial_\tau K_\alpha = \hat{H}_u K_\alpha,
	\end{equation}
	where
	\begin{equation*}
		\hat{H}_u=-\frac{\alpha^2}{2M}\nabla_u^2+\frac{1}{2}M\omega_0^2 |u|^2
	\end{equation*}
	is the differential operator formally corresponding to the image-space Hamiltonian.
	
	This result follows from three simultaneous exact identities:
	
	1. The two-point action $S$ satisfies the Hamilton--Jacobi equation identically, as the generating function of the canonical flow;
	2. The Van Vleck amplitude $F(\tau)$ satisfies the classical transport equation exactly, as a direct consequence of the phase-space volume conservation of Hamiltonian flow;
	3. The amplitude is spatially uniform, so its Laplacian vanishes trivially.
	
	Together, these three facts ensure that the complex kernel built from classical action and amplitude is an exact solution to the linear differential equation. The formal statement and full algebraic proof are given in Appendix~\ref{app:proofs}.
	
	This theorem shows that the linear evolution equation is the infinitesimal form of the group representation of the classical symplectic flow. For quadratic Hamiltonians, this representation is exact and isomorphic to the standard quantum propagator. The complex exponential notation is merely a mathematical shorthand; the underlying structure is the real coupled Hamiltonian flow, as we elaborate next.

	\subsection{Real Symplectic Structure and the Role of Complex Notation}\label{subsec:3.4}
	
	It is important to emphasize that the complex notation used above is optional and serves as a compact mathematical representation. The entire structure can be formulated entirely in terms of real symplectic geometry.
	
	Let $\Phi=\Phi_R+\ii\Phi_I$. Separating $\ii\alpha\partial_\tau\Phi=\hat{H}_u\Phi$ into real and imaginary parts gives
	\begin{equation}\label{eq:3.10}
		-\alpha\partial_\tau\Phi_I=\hat{H}_u\Phi_R,\qquad \alpha\partial_\tau\Phi_R=\hat{H}_u\Phi_I.
	\end{equation}
	
	Eliminating $\Phi_I$ yields the second-order real flow equation
	\begin{equation}\label{eq:3.11}
		\partial_\tau^2\Phi_R=-\frac{1}{\alpha^2}\hat{H}_u^2\Phi_R.
	\end{equation}
	
	Conversely, any solution of this real equation reconstructs a solution of the first-order complex equation via $\Phi=\Phi_R+\ii\alpha\hat{H}_u^{-1}\partial_\tau\Phi_R$ on an appropriate domain. The two formulations are mathematically equivalent; the second-order form contains no complex numbers and is a direct generalization of the classical harmonic oscillator equation on function space.
	
	The real symplectic structure can be made explicit as follows. Let $Q_R=L^2(\mathbb{R}^4,\mathbb{R})\times L^2(\mathbb{R}^4,\mathbb{R})$ be the space of real square-integrable pairs $(\Psi_R,\Psi_I)$, with the standard symplectic form
	\begin{equation}\label{eq:3.12}
		\Omega\left((\Psi_R,\Psi_I),(\Phi_R,\Phi_I)\right)=\int_{\mathbb{R}^4}\left(\Psi_R\Phi_I-\Psi_I\Phi_R\right)\,\dd^4 u.
	\end{equation}
	
	This is the standard $(q,p)$ symplectic form of classical mechanics, with $\Psi_R$ as the generalized coordinate and $\Psi_I$ as the generalized momentum. The Hamiltonian functional
	\begin{equation}\label{eq:3.13}
		H[\Psi_R,\Psi_I]=(\Psi_R,\hat{H}_u\Psi_R)_{L^2}+(\Psi_I,\hat{H}_u\Psi_I)_{L^2}
	\end{equation}
	generates the real flow equations (\ref{eq:3.10}). The complex combination $\Psi=\Psi_R+\ii\Psi_I$ is merely a compact way to write the real canonical equations; it introduces no quantum interpretation. The complex structure $J:\Psi\mapsto \ii\Psi$ is the compatible complex structure of the K\"ahler manifold $(Q_R,\Omega)$, an intrinsic geometric property of the symplectic vector space~\cite{Nakahara2018}.

	\subsection{From Kernel to Operator: The Operator Representation}\label{subsec:3.5}
	
	The transition from the classical phase-space function $H_u(u,p_u)$ to the configuration-space differential operator $\hat{H}_u$ follows the standard construction of the energy quadratic form for scalar fields over configuration space. The kinetic energy density $|\nabla\Psi|^2/(2M)$ is equivalent to $-(\Psi,\nabla^2\Psi)_{L^2}/(2M)$ via integration by parts, provided decay boundary conditions hold. The differential operator $\hat{H}_u$ is the operator corresponding to this classical energy quadratic form. Its construction requires only integration by parts and square-integrability of configurations, both standard tools in classical field theory.
	
	In the real symplectic framework, the correspondence between the momentum variable $p_u$ and the spatial derivative is realized via the $L^2$ gradient structure of the symplectic vector space. The compact notation $p_u\mapsto -\ii\alpha\nabla_u$ expresses this correspondence in complex coordinates; the underlying structure remains the real $L^2$ gradient representation. The parameter $\alpha$ appears as the proportionality constant relating the symplectic area element to the phase period.
	
	The evolution equation $\ii\alpha\partial_\tau\Phi=\hat{H}_u\Phi$, derived entirely from the classical kernel, is the linear equation whose spectral properties we now investigate.

	\subsection{Phase Scale as a Unit Convention}\label{subsec:3.6}
	
	Before proceeding to the boundary value problem, we clarify the ontological status of the phase scale $\alpha$. The phase scale $\alpha$ has dimension $ML^2T^{-1}$, the same as action. Within this framework, $\alpha$ is a free parameter setting the unit of action, analogous to the speed of light $c$ in special relativity: the numerical value of $c$ depends on the convention for length and time units, and no structural prediction of relativity depends on that value.
	
	Changing $\alpha$ is equivalent to changing the unit of action. All structural predictions of the theory --- energy ratios, degeneracies, selection rules --- are independent of $\alpha$, as they are properties of classical geometry (squaring map, fiber structure, boundary conditions). The absolute energy scale $E_n$ depends on $\alpha$, but this dependence is no different from the dependence of any dimensional quantity on its expression unit. In practical applications to condensed-matter systems, $\alpha$ is fixed by matching the absolute energy scale to experimental data or to the standard quantum mechanical result, serving as a calibration constant rather than a dynamical variable.

	\subsection{Linear Structure, Superposition, and the Label-Filter Mechanism}\label{subsec:3.7}
	
	Before entering the boundary value problem, it is essential to clarify the physical meaning of the linear structures introduced above. The integral kernel $K_\alpha$ represents the classical symplectic flow on the space of scalar configurations; linearity here is a property of this representation space, not of individual classical orbits. A single classical orbit has a definite energy and follows a deterministic trajectory; it does not ``superpose'' with other orbits. However, at the level of the configuration-space field description, the linear evolution equation admits formal superpositions of different kernel solutions, each weighted by initial data.
	
	In this framework, such superpositions are best understood as \textbf{ensemble representations}: a formal sum of kernel contributions corresponds to a statistical ensemble of classical orbits, each carrying a definite energy label. The physical content is always extracted by projecting onto individual energy shells, preserving the classical energy conservation law at the ensemble level. This label-filter mechanism operates as follows:
	
	- At the \textbf{single-orbit level}, energy is a smooth first integral; each orbit carries a fixed energy label conserved by the equations of motion.
	- At the \textbf{ensemble level}, energy acts as a filter slicing the full phase space into disjoint energy shells---the standard microcanonical construction of classical statistical mechanics.
	- At the \textbf{field level}, the shell parameter inherited from the classical Hamiltonian acts as a filtering condition in the boundary value problem, indexing the orthogonal decomposition of the configuration space.
	
	Consequently, intershell superpositions in the linear representation are formal sums of labeled components. They do not describe single classical configurations, but rather ensembles of orbits with well-defined shell weights. Physical observables are extracted via shell projectors, ensuring that energy conservation is respected. There is no contradiction between the linearity of the representation space and the determinism of the underlying classical dynamics. With this understanding in place, we now turn to the boundary value problem that selects the discrete bound-state spectrum.

	\section{Geometric Boundary Value Problem and the Origin of Discrete Spectra}\label{sec:bvp}
	
	In this section, we formulate the boundary value problem for the image-space linear equation, impose three geometrically motivated constraints, and derive the discrete energy spectrum. The discreteness arises from the joint action of global geometric constraints on the linear representation space.
	
	A comment on the logical structure is in order before we begin. The regularized energy parameter $\varepsilon=4C/u_0$ is a fixed constant determined by the physical coupling $C$ and the regularization scale $u_0$. The image-space frequency $\omega_0$, however, is tied to the physical energy $E$ via Eq.~(\ref{eq:2.24}); as $E$ varies over the negative-energy sector, $\omega_0$ varies correspondingly. The boundary value problem will determine which values of $\omega_0$ (equivalently, which $E$) admit admissible solutions. Furthermore, by the Mehler expansion (Lemma~\ref{lem:mehler}), the integral kernel $K_\alpha$ admits a spectral decomposition that spans the full space of bound solutions of the linear equation. Consequently, studying the spectral properties of the linear boundary value problem is equivalent to studying the spectral content of the classical kernel.
	
	\subsection{Time-Harmonic Separation and Helmholtz Equation}\label{subsec:4.1}
	
	We perform time-harmonic separation on the linear evolution equation to obtain stationary configurations. Assume a solution of the form:
	\begin{equation}\label{eq:4.1}
		\Phi(u,\tau)=\phi(u)\,\ee^{-\ii\varepsilon\tau/\alpha}.
	\end{equation}
	
	Substituting into $\ii\alpha\,\partial_\tau\Phi=\hat{H}_u\Phi$ yields the classical Helmholtz equation in image space:
	\begin{equation}\label{eq:4.2}
		-\frac{\alpha^2}{2M}\nabla_u^2\phi+\frac{1}{2}M\omega_0^2 |u|^2\phi=\varepsilon\,\phi.
	\end{equation}
	
	This is a classical separated-variable equation, with parameter $\tau$ being the intrinsic regularized time.
	
	In dimensionless variables, the equation takes the parameter-free form:
	\begin{equation}\label{eq:4.3}
		-\frac{1}{2}\nabla_{\tilde{u}}^2\tilde{\phi}+\frac{1}{2}|\tilde{u}|^2\tilde{\phi}=\tilde{\varepsilon}\,\tilde{\phi},
	\end{equation}
	where $\tilde{\varepsilon}=\varepsilon/(\alpha\omega_0)$.
	
	At this stage, $\tilde{\varepsilon}$ is merely a separation constant; its values have not yet been restricted. The discreteness will emerge from the boundary conditions.

	\subsection{The Three Classical Constraints}\label{subsec:4.2}
	
	The full boundary value problem consists of the Helmholtz equation supplemented by three conditions, each with a clear physical or geometric meaning.
	
	\textbf{Condition 1: Decay at infinity}. Bound configurations correspond to classical orbits confined to a finite region of space. In the image space, this requires the field to decay exponentially as $\rho=|u|\to\infty$:
	\begin{equation}\label{eq:4.4}
		\phi(u)\sim\exp\left(-\frac{M\omega_0\rho^2}{2\alpha}\right)\qquad(\rho\to\infty).
	\end{equation}
	
	In dimensionless form: $\tilde{\phi}(\tilde{u})\sim e^{-|\tilde{u}|^2/2}$ as $|\tilde{u}|\to\infty$.
	
	It is important to recognize that exponential decay is the standard mathematical manifestation of spatial confinement within the linear representation space. In classical mechanics, a bound orbit is defined by the fact that the particle never escapes to infinity; when this property is encoded in the configuration-space field description, the natural mathematical requirement is square-integrability with exponential decay. Thus the decay condition is not an externally imposed quantum postulate, but the linear-field transcription of the classical requirement that all bound trajectories remain within a finite spatial region.
	
	\textbf{Condition 2: Regularity at the origin}. All physical observables of Coulomb bound motion must be finite at the collision point $r=0$, which corresponds to $u=0$. The field must therefore be non-singular at the origin:
	\begin{equation}\label{eq:4.5}
		\phi(u)\sim \rho^k\quad(k\geq 0)\qquad(\rho\to 0).
	\end{equation}
	
	In dimensionless form: $\tilde{\phi}(\tilde{u})\sim |\tilde{u}|^k$ as $|\tilde{u}|\to 0$.
	
	\textbf{Condition 3: Fiber invariance}. The KS squaring map projects the entire fiber circle $u\mapsto u \ee^{\hat{n}\theta}$ to the same physical point, forming a Hopf fibration structure~\cite{Nakahara2018}. To ensure that the image-space field corresponds to a well-defined physical field, we require that the field be constant along each fiber:
	\begin{equation}\label{eq:4.6}
		\phi(u \ee^{\hat{n}\theta})=\phi(u)\qquad\forall\theta\in[0,2\pi).
	\end{equation}
	
	In dimensionless form: $\tilde{\phi}(\tilde{u} \ee^{\hat{n}\theta})=\tilde{\phi}(\tilde{u})$.
	
	This is a gauge constraint: physical configurations are pullbacks of functions on the three-dimensional physical space. The fiber invariance condition ensures that the field does not depend on the unphysical fiber coordinate.
	
	These three conditions have universal physical and geometric connotations: the decay condition corresponds to the spatial confinement property of localized states, regularity corresponds to the finiteness of physical observables, and fiber invariance corresponds to the constraint of gauge symmetry~\cite{Nakahara2018}. Together, they constitute a universal geometric screening rule for localized bound states.

	\subsection{Separation of Variables and the Spectral Condition}\label{subsec:4.3}
	
	In four-dimensional hyperspherical coordinates, write the solution as
	\begin{equation}\label{eq:4.7}
		\phi(u)=R(\rho)Y_k(\Omega),
	\end{equation}
	where $Y_k(\Omega)$ is a hyperspherical harmonic of degree $k$ on $S^3$~\cite{Cordani2003,Stiefel1971}. The radial equation reads
	\begin{equation}\label{eq:4.8}
		-\frac{\alpha^2}{2M}\left[R''+\frac{3}{\rho}R'-\frac{k(k+2)}{\rho^2}R\right]+\frac{1}{2}M\omega_0^2\rho^2 R=\varepsilon R.
	\end{equation}
	
	Here $k(k+2)$ is the eigenvalue of the Laplace--Beltrami operator on $S^3$ for degree-$k$ hyperspherical harmonics. In dimensionless form, with $\tilde{\phi}(\tilde{u})=\tilde{R}(\tilde{\rho})Y_k(\Omega)$:
	\begin{equation}\label{eq:4.9}
		-\frac{1}{2}\left[\tilde{R}''+\frac{3}{\tilde{\rho}}\tilde{R}'-\frac{k(k+2)}{\tilde{\rho}^2}\tilde{R}\right]+\frac{1}{2}\tilde{\rho}^2\tilde{R}=\tilde{\varepsilon}\tilde{R}.
	\end{equation}
	
	The decay and regularity conditions alone select the solutions of the four-dimensional harmonic oscillator:
	\begin{equation}\label{eq:4.10}
		\tilde{\varepsilon}=N+2,\qquad N=2n_r+k,
	\end{equation}
	where $n_r=0,1,2,\dots$ is the radial quantum number and $k=0,1,2,\dots$ is the hyperspherical degree. The constant offset $2$ is the zero-point energy of the four-dimensional oscillator, arising from its four degrees of freedom. The corresponding configurations are the standard Hermite--hyperspherical functions.
	
	Now impose the fiber invariance condition. Identifying $S^3$ with SU(2) via the quaternionic unit sphere, the hyperspherical harmonics $Y_k(\Omega)$ are matrix elements of the spin-$j=k/2$ representation, carrying the left-right multiplication action of SU(2)$_L\times$ SU(2)$_R$~\cite{Cordani2003}. The fiber action $u\mapsto u \ee^{\hat{n}\theta}$ is the right action of a one-parameter subgroup; invariance under it requires vanishing right weight $m_R=0$. Such a weight exists if and only if $j$ is an integer, i.e., $k$ is even. The representation-theoretic details are formalized in Lemma~\ref{lem:fiber} of Appendix~\ref{app:proofs}.
	
	Write $k=2l$, where $l=0,1,2,\dots$. The selection condition is therefore
	\begin{equation}\label{eq:4.11}
		k=2l\quad\text{even}.
	\end{equation}
	
	Here $l$ is the physical orbital angular momentum index; its identification with the usual three-dimensional angular momentum quantum number is confirmed by Lemma~\ref{lem:radial}, where the transported radial equation exhibits the centrifugal term $l(l+1)/r^2$ characteristic of orbital angular momentum $l$.
	
	Combining with $N=2n_r+k=2(n_r+l)$, define the principal shell index
	\begin{equation}\label{eq:4.12}
		n=n_r+l+1.
	\end{equation}
	
	Then
	\begin{equation}\label{eq:4.13}
		N=2n-2,
	\end{equation}
	and the spectral condition becomes
	\begin{equation}\label{eq:4.14}
		\tilde{\varepsilon}=2n\qquad(n=1,2,3,\dots).
	\end{equation}
	
	The discreteness of the spectrum is now manifest. It is essential to distinguish two layers of logic here. First, the classical phase-space geometry (squaring map, fiber structure, and conserved quantities) provides the structural foundation: it determines the image-space oscillator form, the fiber invariance selection rule, and the symmetry content of the problem. Second, within this classical geometric framework, the linear Helmholtz equation admits solutions for all continuous $\tilde{\varepsilon}$; only after imposing the three global geometric constraints---decay, regularity, and fiber invariance---is the discrete subset $\tilde{\varepsilon}=2n$ selected. Thus the discrete spectrum is not a direct output of classical geometry alone, but the result of classical geometric structures acting as global constraints on the linear representation space.

	\subsection{The Physical Energy Spectrum}\label{subsec:4.4}
	
	Restoring dimensions, we have $\varepsilon=\alpha\omega_0\tilde{\varepsilon}=2n\alpha\omega_0$. But from the regularization, $\varepsilon=4C/u_0$ and $\omega_0^2=-8E/(Mu_0^2)$. Therefore
	\begin{equation}\label{eq:4.15}
		\frac{4C}{u_0}=2n\alpha\sqrt{-\frac{8E}{Mu_0^2}}.
	\end{equation}
	
	Solving for the energy gives
	\begin{equation}\label{eq:4.16}
		E_n=-\frac{MC^2}{2\alpha^2 n^2}.
	\end{equation}
	
	The energy-level ratios
	\begin{equation}\label{eq:4.17}
		\frac{E_n}{E_1}=\frac{1}{n^2}
	\end{equation}
	are independent of both $\alpha$ and $u_0$, determined entirely by the geometry of the squaring map, the fiber-invariant subspace, and the decay boundary conditions. This confirms that the spectral scaling structure is a purely classical geometric property, independent of the specific value of the phase scale.

	\subsection{Measure Transport and Normalization}\label{subsec:4.5}
	
	We now establish the precise correspondence between image-space fields and physical-space fields using the coarea formula. The KS differential has three equal singular values $2\rho/u_0$ on the fiber orthogonal complement, with Jacobian determinant $J=(2\rho/u_0)^3=8\rho^3/u_0^3$. The fiber is parameterized as $u\,\ee^{\hat{n}\theta}$ ($\theta\in[0,2\pi)$), with circumference $2\pi\rho$. By the coarea formula:
	\begin{equation}\label{eq:4.18}
		\dd^4 u=\frac{u_0^3}{8\rho^2}\,dd\theta\wedge \dd^3 w.
	\end{equation}
	
	For a fiber-invariant function $\Phi=\Psi\circ w$, integrating over the fiber gives:
	\begin{equation}\label{eq:4.19}
		\int_{\mathbb{R}^4} |\Phi|^2\,\dd^4 u = \frac{\pi u_0^2}{4}\int_{\mathrm{Im}\,\mathbb{H}} |\Psi|^2\,\frac{1}{r}\,\dd^3 w.
	\end{equation}
	
	The dimensional conversion relation between image-space and physical-space fields is:
	\begin{equation}\label{eq:4.20}
		\chi(w)=\frac{\sqrt{\pi}\,u_0}{2\sqrt{r}}\,\Psi(w).
	\end{equation}
	
	This ensures $|\chi|^2 \dd^3 w=|\Phi|^2 \dd^4 u$ for fiber-invariant functions, with strictly self-consistent dimensions: $[\chi]=L^{-3/2}$, $[\Phi]=L^{-2}$.
	
	To clarify the radial correspondence: the image-space field separates as $\Phi(u)=R(\rho)Y_k(\Omega)$. By the Hopf descent of the angular factor (Lemma~\ref{lem:fiber}), the fiber-invariant hyperspherical harmonic $Y_{k=2l}(\Omega)$ maps to the spherical harmonic $Y_{lm}(\hat{w})$. Setting $\rho^2=u_0r$, the pullback field is $\Psi(w)=R(\sqrt{u_0r})\,Y_{lm}(\hat{w})$. Lemma~\ref{lem:radial} proves that under the substitution $R(\rho)=f(r)$, the four-dimensional radial equation for $R(\rho)$ becomes precisely the standard three-dimensional Coulomb radial equation for $f(r)$. Thus the unnormalized physical-space field takes the separated form $\Psi_{nlm}(w)=f_{nl}(r)\,Y_{lm}(\hat{w})$, and the normalized physical-space field is $\chi_{nlm}(w)=g_{nl}(r)\,Y_{lm}(\hat{w})$ with $g_{nl}(r)=\frac{\sqrt{\pi}u_0}{2\sqrt{r}}f_{nl}(r)$. It is $g_{nl}(r)$ that coincides with the standard textbook Coulomb radial function normalized in $L^2(\mathbb{R}_+,r^2dr)$.
	
	For shell normalization, if $\int |\Phi_n|^2\,\dd^4 u=1$, then the physical-space normalized form satisfies the standard Coulomb Sturm--Liouville equation and intershell orthonormality, as established formally in Appendix~\ref{app:proofs}.

	\section{Attitude Space, Shell Degeneracy and SO(4) Symmetry}\label{sec:degeneracy}
	
	In this section, we analyze the origin of shell degeneracy from the perspective of classical spatial rotation symmetry and attitude-space geometry. The degeneracy counting is performed entirely within the classical geometric framework.
	
	\subsection{Attitude Vector and Orbital Planarity}\label{subsec:5.1}
	
	For a single classical Kepler orbit, the attitude $\hat{n}$ is fixed by the initial angular momentum, carrying definite orbital angular momentum index $l$. In this case, there is no degeneracy --- each orbit has a well-defined orbital plane orientation.
	
	However, at the field description level, isotropy of physical space requires the field to remain smooth (infinitely differentiable) under arbitrary rotations of the attitude vector $\hat{n}\in S^2$. By the Peter--Weyl theorem, any smooth function on $S^2$ decomposes into spherical harmonics $Y_l^m(\hat{n})$. This is a purely classical requirement: a rotationally invariant theory must admit well-behaved solutions for all orientations.

	\subsection{Degeneracy Counting}\label{subsec:5.2}
	
	For a given energy shell $E_n$ (fixed $n$), the image-space boundary value problem admits orbital angular momentum values $l=0,1,\dots,n-1$. For each $l$, regularity on the attitude sphere admits $2l+1$ independent orientations, corresponding to distinct azimuthal orientations of the orbital plane, indexed by $m=-l,\dots,l$.
	
	The total degeneracy is obtained by summing over all allowed $l$:
	\begin{equation}\label{eq:5.1}
		g_n=\sum_{l=0}^{n-1}(2l+1)=n^2.
	\end{equation}
	
	This degeneracy counts the number of independent classical orbital planes sharing the same energy, determined jointly by the three-dimensional spatial rotation symmetry and the regularity condition on the attitude sphere. Mathematically, it is an inevitable result of the representation theory of spatial symmetry groups. It is important to note that this attitude-sphere counting is not an independent derivation but an equivalent physical-space reinterpretation of the fiber-invariant rank formula established in Lemma~\ref{lem:fiber}: both count the dimension of the same representation-theoretic object, viewed either in image space (right-invariant subspace of SU(2) representations) or in physical space (angular-momentum decomposition of attitude harmonics).

	\subsection{SO(4) Symmetry and Orbital Manifold Geometry}\label{subsec:5.3}
	
	The SO(4) symmetry of the bound Coulomb problem provides a deeper geometric interpretation of the shell degeneracy and energy sharing. The generators $J_\pm=\frac{1}{2}(L\pm A')$ commute with each other and each generate an $\mathfrak{so}(3)$ algebra. The Casimir operators $|J_\pm|^2$ are conserved and equal on the $n$-th shell, guaranteeing that all $(l,m)$ channels share the same energy parameter $E_n$.
	
	For Keplerian motion with energy $E$, we have $|J_\pm|^2=-MC^2/(8E)$. The orbital manifold corresponding to $E_n$ is the product of two two-spheres, each with Casimir invariant $j_n=n\alpha/2$. The Kirillov--Kostant--Souriau (KKS) area of each sphere is $4\pi j_n=2\pi n\alpha$, i.e., $n$ phase periods~\cite{Cushman1997,Nakahara2018}.
	
	The shell index $n$ selected by the boundary value problem admits an equivalent geometric interpretation: the corresponding classical orbital manifold has KKS area equal to $n$ phase periods. It is essential to recognize that this area condition is not an independent classical quantization postulate but an equivalent restatement of the spectral condition derived from the boundary value problem. Once the discrete energies $E_n$ are fixed by the geometric constraints of Section~\ref{sec:bvp}, the orbital manifold area automatically takes integer multiples of the phase period; conversely, imposing the area condition would yield the same discrete sequence. The two descriptions are mathematically equivalent.
	
	Notably, the $l=0$ configuration (s-wave) corresponds to the trivial (constant) attitude harmonic. Classically, $L=0$ orbits are degenerate ellipses (straight lines through the force center) that do not select any orbital plane, corresponding to field quantities independent of attitude. This sector is globally regular in the quaternion framework with no coordinate singularities, resolving the coordinate singularity present in traditional orbital element parameterizations.

	\section{Structural Equivalence: Synthesis of Classical and Linear Descriptions}\label{sec:equivalence}
	
	This section synthesizes the preceding constructions into a unified statement of structural equivalence between classical orbital geometry and the linear field theory. It serves both as a summary of the logical relations established in Sections~\ref{sec:regularization}--\ref{sec:degeneracy} and as a theoretical result in its own right. All formal theorem statements and rigorous proofs are collected in Appendix~\ref{app:proofs}.
	
	\subsection{The Shell Correspondence Principle}\label{subsec:6.1}
	
	The boundary value problem and symmetry analysis of the previous sections establish a precise correspondence between classical Kepler shell data and physical bound-state configuration spaces, which we term the \textbf{shell correspondence principle}. Its core content can be summarized as follows.
	
	For fixed classical parameters $M,C,u_0,\alpha>0$:
	
	1. \textbf{Selectivity}: Nontrivial bound configurations satisfying all three geometric boundary conditions exist if and only if the underlying Kepler motion has energy $E_n = -\frac{MC^2}{2\alpha^2 n^2}$ for some positive integer $n$. No continuous family of bound solutions exists; only discrete energy shells are compatible with the global constraints.
	
	2. \textbf{Shell dimension}: For each allowed shell $n$, the space of independent physical configurations has dimension $n^2$. These configurations are labeled by the orbital angular momentum index $l = 0,1,\dots,n-1$ and magnetic index $m = -l,\dots,l$, matching the standard hydrogenic shell structure.
	
	3. \textbf{Physical-space form}: After projection through the KS map and measure correction, the image-space solutions reduce exactly to the standard Coulomb radial functions multiplied by spherical harmonics in three-dimensional space. The full set of bound-state configurations forms an orthonormal family under the standard $L^2$ measure on physical space, and is complete in the decaying sector of the Coulomb problem.
	
	4. \textbf{Orbital manifold matching}: Each shell $n$ corresponds to a classical Kepler orbital manifold $S^2\times S^2$ whose symplectic area equals $n$ phase periods. The discrete shell index thus has a purely classical geometric meaning as the integer number of phase cycles spanning the orbital phase-space volume.
	
	It is essential to emphasize the \textbf{level at which this correspondence operates}. The shell correspondence principle relates \textbf{entire energy shells} (classical orbital manifolds) to \textbf{entire configuration subspaces} (linear bound-state eigenspaces). It does not assert a pointwise one-to-one map between individual stationary configurations and individual classical orbits. A single stationary configuration with nodes is not equivalent to a single classical trajectory; rather, it is a linear combination of classical kernel solutions spanning the shell. The correspondence is structural and ensemble-level: the classical orbital manifold provides the geometric skeleton, and the linear field theory furnishes the functional realization of that skeleton.

	\subsection{Hamilton--Jacobi--Linear Structural Equivalence}\label{subsec:6.2}
	
	A further central result is that three distinct levels of description --- classical Hamilton--Jacobi theory, the classical integral kernel formalism, and the linear field theory --- are \textit{structurally equivalent} on the bound-state sector. By structural equivalence we mean that the three formalisms share identical discrete spectra, isomorphic evolution semigroups, and equivalent symmetry representations, with no asymptotic approximation involved.
	
	The equivalence can be understood in three steps:
	
	- \textbf{Closure}: The integral kernel built purely from classical Hamilton--Jacobi action and Van Vleck amplitude is an exact solution of the linear evolution equation. Every classical trajectory bundle therefore corresponds to a solution of the linear theory.
	- \textbf{Spanning}: Conversely, by the Mehler expansion (Lemma~\ref{lem:mehler}), the classical kernel contains all spectral projectors $P_N$ of the linear equation. Consequently, every solution of the linear equation can be expressed as an integral superposition of the classical kernel against appropriate initial data. The full linear solution space is therefore spanned by solutions originating from classical HJ theory.
	- \textbf{Intertwining}: The one-parameter group structure of the classical symplectic flow maps exactly to the convolution semigroup of the linear propagator. Group composition in phase space corresponds to kernel convolution in configuration space.
	
	We emphasize that this equivalence holds at the level of the full solution space and spectral structure. For individual stationary configurations with nodes, there is no pointwise one-to-one map to a single HJ solution; however, all such stationary configurations are linear combinations of classical kernel solutions, so the overall equivalence remains intact.
	
	Notably, the entire equivalence can be formulated in purely real symplectic terms, with no complex numbers required. Complex notation serves only as a convenient mathematical compactification of the real canonical field equations.

	\subsection{Scope and Technical Boundaries}\label{subsec:6.3}
	
	The theoretical framework itself is subject to the following inherent boundaries:
	
	1. \textbf{Sector restriction}: Only the negative-energy bound-state sector is treated. The positive-energy scattering regime requires separate asymptotic analysis and lies outside the present scope.
	2. \textbf{Integrability}: Exact quadratic closure relies on the maximally superintegrable nature of the Coulomb problem, which yields a purely quadratic image Hamiltonian after KS regularization. Non-integrable perturbations break exact closure.
	3. \textbf{Caustics}: Kernel statements hold locally in caustic-free time intervals; global extension requires standard Maslov phase corrections~\cite{Maslov1981}. Spectral conclusions are independent of caustic handling, as they follow directly from Sturm--Liouville theory and the Mehler expansion.
	4. \textbf{Phase scale invariance}: All dimensionless structural predictions --- energy ratios, degeneracies, selection rules --- are independent of the value of $\alpha$. Only the absolute energy scale depends on this unit-convention parameter.
	5. \textbf{Single-particle limit}: The framework addresses single-particle classical dynamics. Many-body interactions, disorder and decoherence effects are not included.

	\section{Applications to Condensed-Matter Coulomb-Like Systems}\label{sec:applications}
	
	The geometric framework established above is not limited to the atomic Coulomb problem. In this section, we generalize it to three representative condensed-matter systems, quantifying the conditions under which the geometric spectral rules hold, demonstrating computational advantages, and connecting them directly to experimental observables.
	
	\subsection{Shallow Impurity States: Effective-Mass Renormalization and Computational Efficiency}\label{subsec:7.1}
	
	Shallow donors and acceptors in semiconductors constitute the most extensively studied condensed-matter analogues of the hydrogen atom. In silicon, gallium arsenide and other common semiconductors, the low-lying impurity levels exhibit clear $1/n^2$ scaling and hydrogenic shell structure, conventionally described by the effective-mass approximation~\cite{Kohn1955}.
	
	Let $M^*$ denote the electron effective mass in the conduction band, and $\varepsilon_r$ the relative permittivity of the host material. The screened Coulomb coupling is
	\begin{equation}\label{eq:7.1}
		C^*=\frac{e^2}{4\pi\varepsilon_0\varepsilon_r},
	\end{equation}
	where $e$ is the elementary charge and $\varepsilon_0$ the vacuum permittivity. Replacing the bare mass $M$ with $M^*$ and the bare coupling $C$ with $C^*$, the entire regularization mapping, symplectic flow kernel and boundary value problem carry over unchanged. The resulting energy-level sequence reads
	\begin{equation}\label{eq:7.2}
		E_n = -\frac{M^* {C^*}^2}{2\alpha^2 n^2}.
	\end{equation}
	
	Crucially, effective-mass renormalization and dielectric screening rescale only the absolute energy scale. The $E_n\propto 1/n^2$ scaling and $g_n=n^2$ shell degeneracy remain strictly invariant, governed entirely by the geometry of the quadratic map, fiber constraints and boundary conditions. This provides a geometric explanation for the empirical universality of hydrogenic shell structure across chemically distinct semiconductors: the spectral architecture is determined by phase-space symmetry, not by specific material parameters.
	
	For degenerate valence-band holes, heavy-hole and light-hole effective masses define two decoupled Coulomb-like spectral families, each obeying the same geometric constraints. Lifting of degeneracy between the two families originates from band-structure anisotropy, not from a breakdown of the geometric spectral rules. This is fully consistent with experimental observations of hole impurity spectra in cubic semiconductors.
	
	\textbf{Computational advantage}: In the geometric framework, the bound-state spectrum is obtained by enforcing three geometric constraints on the image-space oscillator, yielding closed-form expressions for $E_n$ and $g_n$ without numerical diagonalization. For a shallow donor in GaAs ($M^*=0.067m_e$, $\varepsilon_r=12.9$), the effective Rydberg is $E_1^*\approx 5.9$ meV. Rather than solving the radial Schr\"odinger equation numerically or performing a variational calculation, the geometric framework delivers the entire spectral sequence analytically via Eq.~(\ref{eq:7.2}). This bypassing of explicit diagonalization becomes particularly valuable for spatially varying potentials or multi-valley materials where direct numerical solution is computationally expensive.

	\subsection{Moir\'e Superlattice Localized States: Quantitative Validity Criteria and Symmetry Diagnostics}\label{subsec:7.2}
	
	Twisted two-dimensional materials form long-period moir\'e potentials that can trap electrons, holes and excitons into localized bound states. In particular, Rydberg moir\'e excitons observed in semiconductor--graphene heterostructures exhibit hydrogen-like shell structures at low excitation energies, while deviating from pure Coulomb scaling at higher levels. We now establish rigorous validity criteria for the geometric spectral framework in these systems and demonstrate its utility as a symmetry-diagnostic tool.
	
	The general moir\'e potential takes the form
	\begin{equation}\label{eq:7.3}
		V(\boldsymbol{r}) = -V_0 f(\boldsymbol{r}/a_M),
	\end{equation}
	where $a_M$ is the moir\'e period, $V_0$ the potential depth, and $f$ a dimensionless shape function. Near the potential minimum under rotational symmetry, we distinguish two regimes:
	
	1. \textbf{Parabolic well limit}: For sufficiently deep, narrow wells, the potential reduces to $V(r)\approx -V_0+\frac{1}{2}M\omega^2 r^2$, which is exactly a quadratic Hamiltonian. Quadratic closure holds strictly, and the full discrete spectrum is determined by classical geometric constraints with energy-level equidistance.
	
	2. \textbf{Coulomb-like regime}: For potentials with long-range Coulomb-like asymptotic decay, low-lying states with principal index $n$ are concentrated near the potential bottom. When the characteristic effective Bohr radius $a_B^* \ll a_M$, one can define an effective coupling $C_{\text{eff}}$ from the local potential curvature, and the energy-level ratios and shell degeneracy predicted by our framework hold to high accuracy.
	
	\textbf{Validity boundary}: When $a_B^*$ becomes comparable to $a_M$, finite-size effects and periodic boundary conditions replace the far-field decay condition. The energy sequence departs from strict $1/n^2$ scaling. Notably, however, as long as rotational symmetry is preserved, the magnetic degeneracy $2l+1$---derived purely from spatial rotation symmetry---remains exact, independent of the detailed potential shape.
	
	\textbf{Symmetry diagnostics without diagonalization}: A key practical advantage of the geometric framework is that symmetry-protected spectral features can be identified \textit{a priori}. For any moir\'e potential with azimuthal symmetry, the geometric constraints immediately predict that each shell carries a magnetic degeneracy $2l+1$, regardless of the radial profile $f(\boldsymbol{r}/a_M)$. This allows experimentalists to identify symmetry-protected degeneracies in optical or transport spectra without resorting to heavy numerical simulation. For example, if angle-resolved photoluminescence resolves $n^2$ peaks within a shell, the geometric framework guarantees that the underlying confining potential possesses (at least approximate) rotational symmetry, even if its detailed form is unknown.
	
	These criteria are quantitatively consistent with recent experimental observations of Rydberg moir\'e excitons~\cite{Hu2023,He2024}: at large moir\'e periods (small twist angles), low-lying levels follow hydrogenic scaling; at smaller periods, level spacing deviates as the states probe the periodic lattice boundary. Our geometric framework thus provides a transparent, symmetry-based tool for interpreting and predicting moir\'e exciton spectra, reducing the need for parameter-heavy numerical tight-binding or density-functional calculations.

	\subsection{Semiclassical Transport and Spectroscopy: Exactness of the Van Vleck Propagator}\label{subsec:7.3}
	
	The Van Vleck propagator is a cornerstone of semiclassical transport theory, used to describe ballistic conductance, weak localization, and magnetic quantum oscillations in condensed-matter systems~\cite{Shoenberg1984}. It is conventionally regarded as a first-order approximation in $\hbar$, with intrinsic errors at low energies. Yet experiments systematically show remarkable agreement between semiclassical calculations and quantum measurements for quadratic Hamiltonians. Our quadratic closure theorem provides a rigorous explanation for this agreement and an exact algorithmic foundation for low-energy transport calculations.
	
	For all quadratic Hamiltonians common in condensed matter---parabolic band dispersion under the effective-mass approximation, and charged-particle motion in uniform magnetic fields---the following exact statements hold:
	
	1. The classical Van Vleck propagator is structurally identical to the full quantum propagator, with zero semiclassical approximation error in the evolution semigroup.
	2. This equivalence is not restricted to high quantum numbers; it applies equally to ground and low-lying excited states.
	3. Deviations between semiclassical and quantum results arise exclusively from non-quadratic terms---such as higher-order potential nonlinearities, impurity scattering, or band warping---and are not an intrinsic artifact of the semiclassical formalism.
	
	\textbf{Algorithmic implication}: In Monte Carlo wavepacket simulations of ballistic transport, the quadratic closure theorem permits the use of the classical Van Vleck kernel without $\hbar$-correction terms. For a quantum point contact with parabolic confinement, the transmission coefficients computed from the exact classical kernel match the Landauer--B\"uttiker results to all orders, eliminating the need for post-hoc quantum corrections at low temperatures.
	
	This conclusion rationalizes several landmark experimental observations:
	
	- Quantized conductance in ballistic quantum point contacts~\cite{VanWees1988}, where the parabolic confinement potential yields exact semiclassical agreement with quantum transport results.
	- The de Haas--van Alphen effect in pure metals and two-dimensional electron gases~\cite{Shoenberg1984}, where the uniform magnetic field Hamiltonian is quadratic, leading to precise semiclassical predictions of oscillation periods and amplitudes.
	
	For systems with weak non-quadratic perturbations, our framework provides an exact zeroth-order classical-geometric baseline, upon which corrections can be systematically included via perturbation theory. This consolidates the theoretical foundation of semiclassical methods in condensed-matter physics and clarifies their domain of validity.

	\subsection{Screened Coulomb Potentials and Perturbative Extensions}\label{subsec:7.4}
	
	While the present framework treats the pure Coulomb potential, realistic condensed-matter systems often exhibit screened interactions $V(r) = -(C/r)e^{-r/\lambda}$ or power-law corrections. Here we outline how the geometric framework serves as a controlled zeroth-order approximation for such systems.
	
	For a screened Coulomb potential, the KS mapping introduces non-quadratic terms in the image-space Hamiltonian. Writing $H_u = H_u^{(0)} + H_u^{(1)}$, where $H_u^{(0)}$ is the isotropic oscillator and $H_u^{(1)}$ contains the screening-induced anharmonicity, classical canonical perturbation theory on the image-space oscillator yields energy shifts
	\begin{equation*}
		\Delta E_n^{(1)} = \langle \Psi_{nlm} | H_u^{(1)} | \Psi_{nlm} \rangle,
	\end{equation*}
	where $\Psi_{nlm}$ are the unperturbed geometric eigenstates. Because the zeroth-order spectrum and eigenfunctions are known analytically, the perturbation series can be carried out without numerical diagonalization of the full Hamiltonian. This is particularly advantageous for disorder-averaged calculations, where ensemble averaging over impurity configurations is required.
	
	In the limit $\lambda \gg a_B^*$, the screening is weak and the geometric perturbation series converges rapidly. In the opposite limit $\lambda \sim a_B^*$, the bound states lose their hydrogenic character; the geometric framework then identifies the breakdown of the $1/n^2$ scaling as a violation of the far-field decay constraint (Condition 1), providing a clear physical criterion for the transition to non-hydrogenic spectra.

	\section{Discussion}\label{sec:discussion}
	
	\subsection{Relation to Existing Theories}\label{subsec:8.1}
	
	This work complements both old quantum theory and modern semiclassical approaches. Unlike the Bohr--Sommerfeld quantization rule, which imposes action-integral conditions as an ad hoc postulate, our derivation introduces no quantization axioms. Discreteness emerges naturally from global boundary constraints acting on the linear representation of the classical symplectic flow.
	
	Unlike WKB and EBK semiclassical theories, our results are exact and involve no asymptotic approximation. The quadratic closure of the harmonic oscillator kernel means the linear evolution equation holds strictly for the complete classical kernel; there is no $\hbar\to 0$ limit and no short-wavelength assumption. For Coulomb systems, the structural match between classical geometry and quantum spectra is not an approximation---it is an exact theorem.
	
	\textbf{Comparison with the Duru--Kleinert path-integral method.} It is instructive to contrast the present framework with the Duru--Kleinert (DK) approach~\cite{Duru1979}, which represents the most celebrated application of KS regularization in the quantum context. The two frameworks share the same mathematical backbone---the KS squaring map and the equivalence between the Coulomb problem and the four-dimensional harmonic oscillator---and they yield identical discrete spectral results for bound states. However, their points of departure, logical structures, and core objectives are fundamentally different:
	
	- \textbf{Common ground}: Both methods employ KS regularization to transform the Coulomb singularity into a harmonic oscillator. Both exploit the fact that the oscillator propagator is exactly computable. Both arrive at the hydrogenic $1/n^2$ spectrum and $n^2$ degeneracy.
	
	- \textbf{Core difference in starting point}: The DK method operates within the quantum path-integral formalism. It takes Feynman's superposition principle and the quantum propagator as its foundational objects, using the KS map as a technical change of variables to evaluate the quantum mechanical path integral. In contrast, the present framework begins from purely classical Newtonian dynamics and constructs the integral kernel from the classical symplectic flow's group composition law. The linear evolution equation is derived, not postulated, as the infinitesimal representation of the classical flow.
	
	- \textbf{Core difference in logical route}: DK derives the quantum propagator first and then reads off the spectrum from its poles. We construct the classical kernel first, establish its exact linear evolution property (quadratic closure), and then derive the discrete spectrum by imposing geometric boundary constraints on the linear equation. The spectral discreteness in DK is inherited from the quantum mechanical spectrum of the oscillator; in our framework, it is produced by the joint action of classical geometric constraints on the linear representation space.
	
	- \textbf{Core difference in contribution}: While DK provides a powerful technique for evaluating quantum propagators, our framework reveals that the structural features of the discrete spectrum (scaling, degeneracy, symmetry content) are already determined by classical phase-space geometry and global constraints. The three geometric constraints---decay, regularity, and fiber invariance---provide a symmetry-transparent, computationally efficient screening rule that does not require path integration.
	
	In short, the Duru--Kleinert method shows how quantum mechanics reproduces the hydrogen spectrum via path integration; the present work shows how the same spectral structure follows from classical phase-space geometry, regularization, and boundary constraints, offering a complementary geometric foundation.
	
	The equivalence between the Hamilton--Jacobi theory and the linear equation established in this work is also exact. The closure map shows that the classical kernel satisfies the linear equation; the flat-amplitude characterization shows that, on the generating class, the linear equation reduces to the Hamilton--Jacobi and transport equations; and the spanning theorem shows that all linear solutions are integral superpositions of such kernels. Thus the nonlinearity of the HJ equation is absorbed by the linear superposition of its complete integrals---an exact absorption with no remainder.

	\subsection{Implications for Condensed Matter Physics}\label{subsec:8.2}
	
	The core value of this work for condensed-matter research lies in four concrete insights:
	
	\textbf{Geometric origin of universal Coulomb-like spectra.} The ubiquity of hydrogenic shell structure across semiconductor impurities, quantum dots and moir\'e traps has conventionally been attributed to the universality of quantum mechanics. We demonstrate instead that the energy scaling and shell degeneracy are determined entirely by classical phase-space geometry, regularization structure and global boundary conditions. Material parameters rescale only the absolute energy scale, leaving the spectral architecture invariant. This provides a deeper, symmetry-based explanation for the robustness of shell structure across vastly different condensed-matter platforms, and offers design principles for engineering artificial atoms with targeted spectral properties.
	
	\textbf{Symmetry-based spectral analysis of moir\'e systems.} Moir\'e superlattices offer continuously tunable potential landscapes, but their spectral interpretation often relies on heavy numerical simulation. Our framework provides analytic, geometry-based criteria for when hydrogenic scaling applies, and identifies which spectral features (such as magnetic degeneracy) are protected by symmetry and thus robust against potential details. This can guide experimental design of Rydberg moir\'e exciton devices and simplify the interpretation of angle-resolved and optical spectroscopy data.
	
	\textbf{Rigorous foundation for semiclassical transport theory.} Semiclassical methods are indispensable for connecting microscopic dynamics to macroscopic transport observables, but their approximate status has created persistent uncertainty in low-energy regimes. The quadratic closure theorem proves that for all quadratic Hamiltonians---ubiquitous in semiconductor physics and magnetotransport---the Van Vleck propagator is exact. This explains the longstanding empirical success of semiclassical theory in ballistic and magnetic oscillation phenomena, and precisely locates the source of errors in non-quadratic interactions.
	
	\textbf{Classical geometric roots of symmetry-protected degeneracy.} Symmetry-protected degeneracy is a central concept in quantum topological matter, traditionally discussed entirely within the quantum framework. Our work shows that the full shell degeneracy of the Coulomb problem can be derived from classical phase-space fiber structure and spatial rotation symmetry, rooted in group representation theory. This suggests that quantum symmetry-protected degeneracies inherit their structure from classical phase-space geometry, offering a new perspective on the classical precursors of topological phases.

	\subsection{Label-Filter Structure and Formal Superpositions}\label{subsec:8.3}
	
	The same threefold label structure appears at both the classical phase-space level and the field level, reflecting a universal label-filter mechanism.
	
	At the \textbf{single-orbit level}, energy is a smooth first integral: each orbit carries a fixed energy label, and the equations of motion conserve this label absolutely. At the \textbf{ensemble level}, energy acts as a filter that slices the full phase space into disjoint energy shells, the standard microcanonical construction of classical statistical mechanics. At the \textbf{global level}, energy is the base coordinate of the phase-space fibration, with the full bound phase space being the disjoint union of all energy shells.
	
	At the field level, this structure reappears identically: the shell parameter is inherited from the classical Hamiltonian, it acts as a filtering condition in the boundary value problem, and it finally indexes the orthogonal direct sum of configuration spaces. Shell projectors extract individual shell components from any formal superposition, leaving each component to evolve independently.
	
	In this framework, intershell superpositions are formal sums of labeled components residing in the linear representation space. They do not correspond to individual classical orbits, but to ensembles of orbits with well-defined shell weights. Physical observables are always extracted via shell projection, preserving the classical energy conservation law at the ensemble level. There is no contradiction between linearity of the representation and energy conservation of the underlying dynamics.
	
	Composing with the energy label via a delta function is the distributional form of shell filtering, equivalent to the microcanonical measure of classical statistical mechanics:
	\begin{equation}\label{eq:8.1}
		\int_S F(\xi)\,\delta\left(H(\xi)-E_0\right)\,\dd\Gamma(\xi) = \int_{H^{-1}(E_0)} F\,\frac{\dd\Sigma}{|\nabla H|}.
	\end{equation}
	This is a purely classical operation (coarea formula), independent of any linear structure~\cite{Nakahara2018}.
	
	Define the shell projector:
	\begin{equation}\label{eq:8.2}
		P_n=\delta(\hat{H}-E_n)=\sum_{l,m} |\Psi_{nlm}\rangle\langle\Psi_{nlm}|,
	\end{equation}
	where $\delta$ denotes the delta distribution in the spectral sense, equivalent to the orthogonal projector onto the $n$-th shell subspace. These projectors exist and are mutually orthogonal (from the Sturm--Liouville orthogonality of the transported family). For any formal $\ell^2$-superposition $\Phi=\sum_{nlm} c_{nlm}\Psi_{nlm}$,
	\begin{equation}\label{eq:8.3}
		P_n\Phi=\sum_{l,m} c_{nlm}\Psi_{nlm},\qquad 
		\sum_n P_n=1\quad\text{on the decaying sector}.
	\end{equation}
	
	The filtering property extracts each labeled component without distortion: superpositions preserve component identities---they are labeled coexistence, not fusion.
	
	The legitimacy of intershell superpositions is constrained by four readout rules:
	
	1. An intershell superposition is a formal sum carrying labels---not a single classical configuration, nor a solution of a single image equation.
	2. Its physical readout is given by the projectors $P_n$.
	3. Its evolution proceeds component-wise independently.
	4. Stationary configurations are single-shell configurations; intershell superpositions are nonstationary configurations whose shell content is read by projectors.
	
	At the single-orbit level, energy conservation forbids superposition of orbits with different energies. At the field level, linearity admits intershell formal sums, but physical content is still extracted via shell projectors. This is not a contradiction: the field theory is a linear representation of the classical flow, and superposition is a property of the representation space, not of individual orbits.

	\subsection{Scope, Limitations and Outlook}\label{subsec:8.4}
	
	Beyond the inherent boundaries of the framework stated in Section~\ref{subsec:6.3}, the following limitations and future directions should be noted.
	
	First, the framework currently addresses single-particle integrable systems; many-body interactions, disorder scattering and non-integrable perturbations ubiquitous in real condensed-matter systems are not treated. Second, the flat-amplitude bijection holds on node-free regions; stationary configurations with nodes are covered by the spanning theorem but not by a pointwise Hamilton--Jacobi bijection. Third, the scattering sector ($E>0$) requires separate asymptotic analysis beyond the present bound-state treatment.
	
	Looking forward, a natural direction is to extend the geometric spectral framework to other integrable and nearly integrable condensed-matter systems, such as anisotropic quantum dots and quantum Hall edges. Another promising avenue is to incorporate weak non-quadratic corrections perturbatively within the classical geometric framework, enabling quantitative predictions for realistic semiconductor and moir\'e systems. Finally, exploring the connection between classical phase-space fiber structures and quantum topological degeneracies may open new approaches to semiclassical topological physics.

	\section{Conclusion}\label{sec:conclusion}
	
	We have developed a geometric framework that recovers the $1/n^2$ energy-level ratio and $n^2$ shell degeneracy of the three-dimensional Coulomb system from classical analytical mechanics and symplectic geometry. Discreteness arises from the combined action of classical phase-space geometry, regularization mapping, and global boundary constraints. The phase scale $\alpha$ governs only the absolute energy scale; all dimensionless spectral structures are independent of its value.
	
	The construction proceeds through three key steps: KS regularization maps negative-energy Kepler shells to four-dimensional isotropic harmonic oscillators; the integral kernel derived from the symplectic flow group composition law satisfies a linear evolution equation exactly (quadratic closure); and three geometric constraints---decay, regularity, and fiber invariance---select the discrete spectrum. The Shell Correspondence principle and the HJ--Linear Equivalence principle rigorously establish the structural equivalence between classical orbital data and field-theoretic spectral output.
	
	For condensed-matter physics, this framework offers tangible practical benefits: it provides exact semiclassical propagators for quadratic Hamiltonians, bypassing low-energy breakdown; it delivers symmetry-based diagnostic tools for moir\'e and impurity spectra without numerical diagonalization; and it clarifies the geometric origin of widely observed hydrogenic universality. As a classical geometric benchmark for Coulomb-like systems, this work opens new avenues for the study of integrable and nearly integrable bound states in low-dimensional quantum materials and semiconductor devices.

	\appendix
	
	\section{Dimensional Symbol Table}\label{app:symbols}
	
	This appendix compiles for convenient reference all symbols used in the main text, together with their physical meanings and corresponding dimensions. The conventions established here apply throughout the paper.
	
	\begin{table}[htbp]
		\centering
		\caption{Full dimensional symbol table.}\label{tab:A1}
		\begin{tabular}{lll}
			\hline\hline
			Symbol & Meaning & Dimension \\
			\hline
			$w$ & Physical position (purely imaginary quaternion) & Length $L$ \\
			$u$ & Image-space position (quaternion) & Length $L$ \\
			$r=\|w\|$ & Physical radial coordinate & Length $L$ \\
			$\rho=\|u\|$ & Image-space radial coordinate & Length $L$ \\
			$u_0$ & Regularization length scale in KS mapping & Length $L$ \\
			$p_w,\;p_u$ & Canonical momenta (physical/image spaces) & $MLT^{-1}$ \\
			$H_w,\;H_u$ & Hamiltonian functions (physical/image spaces) & Energy $E$ \\
			$t$ & Physical time & Time $T$ \\
			$\tau$ & Intrinsic (regularized) time & Time $T$ \\
			$M$ & Particle mass & Mass $M$ \\
			$C$ & Coulomb coupling constant & $EL$ \\
			$E$ & Physical energy & Energy $E$ \\
			$\varepsilon=4C/u_0$ & Regularized energy parameter & Energy $E$ \\
			$\omega_0$ & Image-space oscillator frequency & $T^{-1}$ \\
			$\alpha$ & Phase scale (action unit) & $ML^2T^{-1}$ \\
			$S$ & Two-point action & $ML^2T^{-1}$ \\
			$F$ & Normalized Van Vleck amplitude & $L^{-4}$ \\
			$K_\alpha$ & Integral kernel & $L^{-4}$ \\
			$\Phi$ & Image-space field & $L^{-2}$ \\
			$\Psi$ & Physical-space unnormalized field & $L^{-2}$ \\
			$\chi$ & Physical-space normalized field & $L^{-3/2}$ \\
			$\hat{n}$ & Attitude vector (orbital plane normal) & Dimensionless \\
			$L$ & Angular momentum vector & $ML^2T^{-1}$ \\
			$A$ & Runge--Lenz vector & $EL$ \\
			$A'$ & Normalized Runge--Lenz vector & $ML^2T^{-1}$ \\
			$n,\;n_r,\;l,\;m$ & Shell and configuration quantum numbers & Dimensionless \\
			$g_n$ & Shell degeneracy & Dimensionless \\
			\hline\hline
		\end{tabular}
	\end{table}
	
	\textbf{Additional conventions}:
	
	- Shell indices, energy ratios, winding numbers and attitude angles are all dimensionless.
	- Positions, times, actions, fields and evolution operators retain physical dimensions unless otherwise stated.
	- The phase scale $\alpha$ has the same dimension as action and sets the unit of action. All dimensionless structural predictions (energy ratios, degeneracies, selection rules) are independent of its numerical value.
	- The global constants satisfy $M>0,\;C>0,\;u_0>0,\;\alpha>0$. Attractive Coulomb coupling corresponds to $C>0$, hence $\varepsilon=4C/u_0>0$.

	\section{Mathematical Theorems and Proofs}\label{app:proofs}
	
	This appendix collects all formal lemmas, theorems and their rigorous proofs referenced in the main text.
	
	\subsection{Proof of the Quadratic Closure Theorem}\label{subsec:B.1}
	
\begin{lemma}[Polar Decomposition Identity]\label{lem:polar}
	Let $A(u,\tau)>0$ and $S(u,\tau)$ be smooth real functions on a domain, and define $K=A \ee^{\ii S/\alpha}$. Define
	\begin{equation}\label{eq:B.1}
		\mathrm{HJ}(S)=\partial_\tau S+\frac{|\nabla S|^2}{2M}+\frac{1}{2}M\omega_0^2 |u|^2,
	\end{equation}
	and
	\begin{equation}\label{eq:B.2}
		\mathrm{Tr}(A,S)=\partial_\tau A+\frac{1}{M}\nabla A\cdot\nabla S+\frac{A}{2M}\nabla^2 S.
	\end{equation}
	
	Then the algebraic identity
	\begin{equation}\label{eq:B.3}
		\left(\ii\alpha\partial_\tau-\hat{H}_u\right)K
		=\ee^{\ii S/\alpha}\left[-A\,\mathrm{HJ}(S)+\ii\alpha\,\mathrm{Tr}(A,S)+\frac{\alpha^2}{2M}\nabla^2 A\right]
	\end{equation}
	holds, where
	\begin{equation}\label{eq:B.4}
		\hat{H}_u=-\frac{\alpha^2}{2M}\nabla_u^2+\frac{1}{2}M\omega_0^2 |u|^2
	\end{equation}
	is the differential operator corresponding to the image-space Hamiltonian.
\end{lemma}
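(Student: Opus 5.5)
The plan is a direct computation: apply the operator $\ii\alpha\partial_\tau-\hat{H}_u$ to $K=A\,\ee^{\ii S/\alpha}$, collect terms by powers of $\alpha$, and identify each group with the three brackets in (\ref{eq:B.3}). No earlier results are needed beyond the product and chain rules, since $A$ and $S$ are only assumed smooth, with $A>0$ used only to keep the polar form meaningful.

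First I would treat the time derivative. The chain rule gives $\partial_\tau K=\ee^{\ii S/\alpha}\bigl(\partial_\tau A+\tfrac{\ii}{\alpha}A\,\partial_\tau S\bigr)$. Hence
\[
\ii\alpha\,\partial_\tau K=\ee^{\ii S/\alpha}\bigl(\ii\alpha\,\partial_\tau A-A\,\partial_\tau S\bigr).
\]

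Next I would compute the Laplacian. The gradient is $\nabla K=\ee^{\ii S/\alpha}\bigl(\nabla A+\tfrac{\ii}{\alpha}A\nabla S\bigr)$. Differentiating once more gives
\[
\nabla^2K=\ee^{\ii S/\alpha}\Bigl(\nabla^2A+\tfrac{2\ii}{\alpha}\nabla A\cdot\nabla S+\tfrac{\ii}{\alpha}A\nabla^2S-\tfrac{1}{\alpha^2}A|\nabla S|^2\Bigr).
\]
Multiplying by $-\alpha^2/(2M)$ then yields the kinetic contribution
\[
\ee^{\ii S/\alpha}\Bigl(-\tfrac{\alpha^2}{2M}\nabla^2A-\tfrac{\ii\alpha}{M}\nabla A\cdot\nabla S-\tfrac{\ii\alpha}{2M}A\nabla^2S+\tfrac{A}{2M}|\nabla S|^2\Bigr).
\]
The potential term contributes $\ee^{\ii S/\alpha}\,\tfrac12 M\omega_0^2|u|^2A$.

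Finally I would subtract $\hat{H}_uK$ from $\ii\alpha\partial_\tau K$ and group the terms:
\begin{itemize}
\item The $\alpha^0$ terms are $-A\bigl(\partial_\tau S+|\nabla S|^2/(2M)+\tfrac12M\omega_0^2|u|^2\bigr)=-A\,\mathrm{HJ}(S)$.
\item The $\alpha^1$ terms are $\ii\alpha\bigl(\partial_\tau A+\tfrac1M\nabla A\cdot\nabla S+\tfrac{A}{2M}\nabla^2S\bigr)=\ii\alpha\,\mathrm{Tr}(A,S)$.
\item The $\alpha^2$ term is $+\tfrac{\alpha^2}{2M}\nabla^2A$.
\end{itemize}
Together these give exactly (\ref{eq:B.3}).

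There is no genuine obstacle here. The only step needing care is the bookkeeping in the second derivative: the cross term $\nabla A\cdot\nabla S$ must appear with factor $2$, so that after the $1/(2M)$ prefactor it matches the $\tfrac1M\nabla A\cdot\nabla S$ in $\mathrm{Tr}$. The signs from $\ii^2=-1$ must also be tracked, so that the $|\nabla S|^2$ term lands inside $\mathrm{HJ}$ with the correct sign.
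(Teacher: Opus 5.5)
Your proposal is correct and is the same direct expansion the paper uses: compute $\ii\alpha\partial_\tau K$ and $\nabla^2 K$ by the product and chain rules, then regroup into $-A\,\mathrm{HJ}(S)$, $\ii\alpha\,\mathrm{Tr}(A,S)$ and $\frac{\alpha^2}{2M}\nabla^2 A$. Grouping by powers of $\alpha$ is equivalent to the paper's separation into real and imaginary parts, because $A$ and $S$ are real.
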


\textit{Proof}. Direct expansion:
\begin{equation*}
	\ii\alpha\partial_\tau K = \left[\ii\alpha\partial_\tau A - A\partial_\tau S\right]\ee^{\ii S/\alpha},
\end{equation*}
\begin{equation*}
	\nabla^2 K = \left[\nabla^2 A+\frac{2\ii}{\alpha}\nabla A\cdot\nabla S+\frac{\ii}{\alpha}A\nabla^2 S-\frac{1}{\alpha^2}A|\nabla S|^2\right]\ee^{\ii S/\alpha}.
\end{equation*}
Substituting into $(\ii\alpha\partial_\tau-\hat{H}_u)K=\ii\alpha\partial_\tau K+\frac{\alpha^2}{2M}\nabla^2 K-\frac{1}{2}M\omega_0^2|u|^2 K$ and separating real and imaginary parts yields the identity. $\square$

\begin{theorem}[Quadratic Closure]\label{thm:closure}
	Let $K_\alpha=F \ee^{\ii S/\alpha}$ be the kernel uniquely derived from the group composition law. Then $\mathrm{HJ}(S)=0$, $\mathrm{Tr}(F,S)=0$ and $\nabla^2 F=0$ hold simultaneously, so $K_\alpha$ satisfies the linear evolution equation:
	\begin{equation}\label{eq:B.5}
		\ii\alpha\,\partial_\tau K_\alpha = \hat{H}_u K_\alpha.
	\end{equation}
\end{theorem}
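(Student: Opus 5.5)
The plan is to apply Lemma~\ref{lem:polar} with $A=F(\tau)$ from Eq.~(\ref{eq:3.6}) and $S=S(u,\tau;v)$ from Eq.~(\ref{eq:3.4}), treating $v$ as a fixed parameter. We work on the caustic-free interval $0<\omega_0\tau<\pi$, where $\sin(\omega_0\tau)>0$; there $F>0$, both functions are smooth, and the lemma applies. By identity (\ref{eq:B.3}), the linear equation (\ref{eq:B.5}) follows once the bracket vanishes. Because that bracket splits into a real part $-A\,\mathrm{HJ}(S)+\frac{\alpha^2}{2M}\nabla^2A$ and an imaginary part $\alpha\,\mathrm{Tr}(A,S)$, it suffices to verify three separate identities: $\mathrm{HJ}(S)=0$, $\mathrm{Tr}(F,S)=0$ and $\nabla^2F=0$.

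The third identity is immediate, since $F$ depends only on $\tau$. For the same reason $\nabla F=0$, so the cross term $\nabla F\cdot\nabla S$ in $\mathrm{Tr}$ drops out. Throughout, abbreviate $c=\cos(\omega_0\tau)$, $s=\sin(\omega_0\tau)$ and $b=\mathrm{Re}(u^*v)$, and use that $\mathrm{Re}(u^*v)$ is the Euclidean dot product on $\mathbb{H}\cong\mathbb{R}^4$.

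\textbf{Hamilton--Jacobi identity.} First compute the gradient,
\[
\nabla_uS=\frac{M\omega_0}{s}\,(c\,u-v),\qquad |\nabla_uS|^2=\frac{M^2\omega_0^2}{s^2}\bigl(c^2|u|^2-2cb+|v|^2\bigr).
\]
Next differentiate $(ac-2b)/s$ in $\theta=\omega_0\tau$, with $a=|u|^2+|v|^2$, which gives $(-a+2bc)/s^2$. Hence
\[
\partial_\tau S=\frac{M\omega_0^2}{2s^2}\bigl(-|u|^2-|v|^2+2bc\bigr).
\]
Adding $|\nabla S|^2/(2M)$, the $b$-terms and the $|v|^2$-terms cancel, leaving
\[
\frac{M\omega_0^2}{2s^2}(c^2-1)|u|^2=-\tfrac12 M\omega_0^2|u|^2,
\]
which exactly cancels the potential term. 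Thus $\mathrm{HJ}(S)=0$.

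\textbf{Transport identity.} Since $\nabla^2|u|^2=8$ in four dimensions, $\nabla^2S=4M\omega_0c/s$, and therefore $\frac{F}{2M}\nabla^2S=2\omega_0\frac{c}{s}F$. On the other hand, $F\propto s^{-2}$ gives $\partial_\tau F=-2\omega_0\frac{c}{s}F$. These two contributions cancel, so $\mathrm{Tr}(F,S)=0$. This cancellation is the expression of Liouville volume conservation, i.e.\ of the Van Vleck determinant (\ref{eq:3.5}) being $(M\omega_0/s)^4$.

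The main point requiring care is this transport step. The exponent $2$ in $F=(\cdots/s)^2$ must match half the dimension of the configuration space. If it did not, one would get a spatially constant but nonzero imaginary residue, so I would check the factor $4$ in $\nabla^2S$ against the power of $s$ in Eq.~(\ref{eq:3.6}) explicitly. With all three identities in hand, the bracket in (\ref{eq:B.3}) vanishes identically, and $K_\alpha$ satisfies Eq.~(\ref{eq:B.5}) on $0<\omega_0\tau<\pi$.
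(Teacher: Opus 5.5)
Your proposal is correct and follows the paper's proof: you apply the polar-decomposition identity of Lemma~\ref{lem:polar} and verify $\nabla^2F=0$, $\mathrm{Tr}(F,S)=0$ and $\mathrm{HJ}(S)=0$. The transport step reduces to $\partial_\tau F/F=-2\omega_0\cot(\omega_0\tau)$ exactly as in the paper, and the only difference is that you check $\mathrm{HJ}(S)=0$ by explicit differentiation where the paper cites the generating-function property of the two-point action, which makes your version more self-contained.
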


\textit{Proof}. We verify the three conditions separately:

1. $\mathrm{HJ}(S)=0$: The two-point action is the generating function of the classical canonical flow, i.e., a complete integral of the Hamilton--Jacobi equation. This is a standard result in classical mechanics.
2. $\mathrm{Tr}(F,S)=0$: From the explicit action, the Laplacian of $S$ is $\nabla^2 S=4M\omega_0\cot(\omega_0\tau)$, independent of $u$. The transport equation reduces to $\partial_\tau F/F = -\nabla^2 S/(2M) = -2\omega_0\cot(\omega_0\tau)$, whose solution $F\propto\sin^{-2}(\omega_0\tau)$ is exactly the Van Vleck amplitude. The $\nabla F\cdot\nabla S$ term vanishes because $F$ is independent of $u$.
3. $\nabla^2 F=0$: Since $F=F(\tau)$ depends only on time, its spatial Laplacian is trivially zero.

By Lemma~\ref{lem:polar}, $(\ii\alpha\partial_\tau-\hat{H}_u)K_\alpha=0$. $\square$

	\subsection{Core Theorems and Supporting Lemmas}\label{subsec:B.2}
	
	We first restate the basic setup for clarity:
	
	(i) \textbf{Image Hamiltonian}: On the constraint surface $\mu=\langle p_u,u\hat{n}\rangle=0$ and shell $H_w=E<0$:
	\begin{equation*}
		H_u=\frac{|p_u|^2}{2M}+\frac{1}{2}M\omega_0^2|u|^2,\quad H_u=\varepsilon \iff \lambda\,(H_w-E)=0,\quad \lambda=\frac{4r}{u_0}.
	\end{equation*}
	
	(ii) \textbf{Linear equation}: Introduce phase scale $\alpha$ and operator $\hat{H}_u=-\frac{\alpha^2}{2M}\nabla_u^2+\frac{1}{2}M\omega_0^2|u|^2$:
	\begin{equation*}
		\ii\alpha\,\partial_\tau\Phi=\hat{H}_u\Phi.
	\end{equation*}
	
	(iii) \textbf{Fiber}: The circular action $u\mapsto u \ee^{\hat{n}\theta}$ leaves $w$ unchanged; invariant functions are pullbacks $\Phi=\Psi\circ\pi$, where $\pi:u\mapsto w$ is the squaring map.
	
	(iv) \textbf{Harmonic oscillator spectrum}: The spectrum of the four-dimensional isotropic harmonic oscillator is
	\begin{equation*}
		\varepsilon_N=\alpha\omega_0(N+2),\quad N=2n_r+k,\quad n_r,k\geq0,
	\end{equation*}
	with total degeneracy $d_N=\binom{N+3}{3}$. The fiber constraint $k=2l$ restricts $N$ to even values $N=2n-2$, corresponding to physical shells $n\geq1$.

	\subsubsection{Preparatory Lemmas}\label{subsubsec:B.2.1}
	
\begin{lemma}[Mehler Expansion]\label{lem:mehler}
	The image harmonic oscillator kernel admits the expansion
	\begin{equation}\label{eq:B.6}
		K_\alpha(u,\tau;v)=\sum_{N=0}^\infty \ee^{-\ii\varepsilon_N\tau/\alpha}\,P_N(u,v),\quad P_N(u,v)=\sum_{a=1}^{d_N}\phi_{N,a}(u)\,\overline{\phi_{N,a}(v)},
	\end{equation}
	where $\phi_{N,a}$ are normalized Hermite--hyperspherical stationary configurations. The equality holds as the boundary value of an analytic function for $\mathrm{Im}\,\tau<0$ (the $\ii0^+$ prescription); it holds pointwise in the caustic-free interval $0<\omega_0\tau<\pi$.
\end{lemma}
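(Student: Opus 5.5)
The plan is to reduce everything to the one-dimensional Mehler formula and then assemble the four-dimensional kernel as a product. In the dimensionless variables of Eqs.~(\ref{eq:2.2}), (\ref{eq:3.7}) and (\ref{eq:3.8}), the action is separable: $\tilde S=\sum_{a=1}^4 \tilde S_1(\tilde u_a,\tilde\tau;\tilde v_a)$ with
\[
\tilde S_1(x,\tilde\tau;y)=\frac{(x^2+y^2)\cos\tilde\tau-2xy}{2\sin\tilde\tau},
\]
and $\tilde F=\prod_{a=1}^4 (2\pi\sin\tilde\tau)^{-1/2}$. Up to the branch of the square root, which is fixed below by analytic continuation, $\tilde K=\prod_a \tilde K_1(\tilde u_a,\tilde\tau;\tilde v_a)$, where $\tilde K_1=(2\pi\ii\sin\tilde\tau)^{-1/2}e^{\ii\tilde S_1}$ is the one-dimensional oscillator kernel. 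I will first check the prefactor against the normalization fixed by the delta condition (\ref{eq:3.1}); the factors of $\ii$ in the one-dimensional kernels combine into an overall constant phase, which I will track explicitly.

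The key step is the classical Mehler identity. For $|z|<1$,
\[
\sum_{k\ge0}\frac{z^k}{2^k k!\sqrt\pi}H_k(x)H_k(y)e^{-(x^2+y^2)/2}=\frac{1}{\sqrt{\pi(1-z^2)}}\exp\!\left[\frac{4xyz-(1+z^2)(x^2+y^2)}{2(1-z^2)}\right].
\]
I will prove this directly, using the integral representation $H_k(x)=\frac{(2\ii)^k}{\sqrt\pi}e^{x^2}\int t^k e^{-t^2+2\ii xt}\,dt$: insert it for both Hermite factors, sum the geometric-type exponential series inside the double integral, and evaluate the resulting two-dimensional Gaussian integral. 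I then set $z=e^{-\ii\tilde\tau}$ with $\mathrm{Im}\,\tilde\tau<0$, so that $|z|<1$. The identities $(1+z^2)/(1-z^2)=-\ii\cot\tilde\tau$ and $2z/(1-z^2)=-\ii/\sin\tilde\tau$ turn the right-hand side into $e^{\ii\tilde\tau/2}\tilde K_1$. This gives $\tilde K_1=\sum_k e^{-\ii(k+1/2)\tilde\tau}h_k(x)h_k(y)$, where $h_k$ are the normalized Hermite functions.

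Taking the product over the four coordinates gives the sum over $(k_1,\dots,k_4)$ weighted by $e^{-\ii(N+2)\tilde\tau}$ with $N=\sum k_a$. Grouping by $N$ produces $P_N(u,v)$, the reproducing kernel of the $N$-th eigenspace. Because an orthogonal projector's kernel is independent of the choice of orthonormal basis, I can replace the Cartesian Hermite basis by the Hermite--hyperspherical basis $\phi_{N,a}$ of Section~\ref{subsec:4.3}. The dimension $d_N=\binom{N+3}{3}$ is the count of solutions to $\sum k_a=N$. Restoring dimensions via $\varepsilon_N=\alpha\omega_0(N+2)$ yields Eq.~(\ref{eq:B.6}) for $\mathrm{Im}\,\tau<0$.

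The main obstacle is the boundary-value statement. For $\mathrm{Im}\,\tau<0$ the series converges absolutely and locally uniformly, since $|h_k|$ is bounded uniformly in $k$ (Cramér's inequality) and $|z|^N$ decays geometrically. At real $\tau$ this control fails, because $|z|=1$. I will argue as follows. The closed form is analytic in $\tau$ on the lower half-plane and extends continuously to the real interval $0<\omega_0\tau<\pi$, where $\sin\omega_0\tau\neq0$. The branch of $(\sin)^{-2}$ is fixed by continuity from $\tau\to -\ii\infty$; for the fourth power the branch ambiguity is trivial. Hence the pointwise value in the caustic-free interval is the limit $\tau-\ii0^+$, which is exactly the meaning of the expansion there. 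The series itself is then interpreted as an Abel limit, or equivalently as a distributional limit, which suffices for the spectral uses of the lemma in Section~\ref{sec:bvp}.
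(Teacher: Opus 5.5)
\textbf{Verdict: genuine gap, shared with the paper's own proof.} Your route is the paper's route. You split $\mathrm{Re}(u^*v)$ into four Euclidean coordinates, apply the one-dimensional Mehler identity at $z=\ee^{-\ii\tilde\tau}$, multiply the four factors and regroup by $N$. Several of your additions improve on the paper:
\begin{itemize}
\item an actual proof of Mehler's identity;
\item basis-independence of the projector kernel $P_N$;
\item reading the real-$\tau$ statement as a $\tau-\ii0^+$ (Abel) limit. This is genuinely needed, because at $u=v=0$ the terms computed below grow linearly in $N$.
\end{itemize}
The problem is the constant phase you promise to ``track explicitly.'' It is not trivial, and it breaks the conclusion. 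The product of your one-dimensional kernels is
\[
\prod_{a=1}^4\tilde K_1=(2\pi\ii\sin\tilde\tau)^{-2}\,\ee^{\ii\tilde S}=-\,\tilde F(\tilde\tau)\,\ee^{\ii\tilde S},
\]
not $\tilde F\,\ee^{\ii\tilde S}$. This is not a branch issue. The fourth power of any branch of $(\cdot)^{-1/2}$ is the same, and $(\sin\tilde\tau)^{-2}$ is single-valued, so nothing remains to be chosen.

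A one-line check at $u=v=0$ confirms this. Mehler at $x=y=0$ gives $P_{2j}(0,0)=(j+1)/\pi^2$ and $P_{2j+1}(0,0)=0$. Hence, for $\mathrm{Im}\,\tilde\tau<0$,
\[
\sum_{N}\ee^{-\ii(N+2)\tilde\tau}P_N(0,0)=\frac{1}{\pi^2}\,\frac{z^2}{(1-z^2)^2}=\frac{1}{\pi^2(2\ii\sin\tilde\tau)^2}=-\frac{1}{4\pi^2\sin^2\tilde\tau}.
\]
The paper's kernel instead gives $\tilde K(0,\tilde\tau;0)=+1/(4\pi^2\sin^2\tilde\tau)$. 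So what your argument proves is $\sum_N\ee^{-\ii\varepsilon_N\tau/\alpha}P_N=-K_\alpha$. With $K_\alpha=F\,\ee^{\ii S/\alpha}$ and $F>0$, as defined in Eqs.~(\ref{eq:3.3}) and (\ref{eq:3.6}), Eq.~(\ref{eq:B.6}) is off by an overall sign and cannot be concluded. The step that fails is your claim that $\tilde K=\prod_a\tilde K_1$ ``up to the branch of the square root.''

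Your own two planned checks would have exposed this:
\begin{itemize}
\item As $\tau\to0^+$, $\int F\,\ee^{\ii S/\alpha}\,\dd^4u\to(\ee^{\ii\pi/4})^4=-1$. So $K_\alpha\to-\delta$, contradicting (\ref{eq:3.1}).
\item Continuing from $\tilde\tau=-\ii t$ compares a positive series with $\tilde F(-\ii t)=-(2\pi\sinh t)^{-2}<0$.
\end{itemize}
The composition law (\ref{eq:3.2}) also pins the constant. If $K$ satisfies it, $cK$ does so only for $c=1$. Since $F\,\ee^{\ii S/\alpha}$ is $-1$ times the correctly normalized propagator, it fails (\ref{eq:3.2}).

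The paper's proof drops the same factor $\ii^{-2}=-1$ when it says the imaginary unit is ``absorbed into the phase $\ii S/\alpha$.'' That step is not legitimate either. The repair is to keep the constant phase $\ee^{-\ii\pi}$, which is the $d=4$ case of $\ee^{-\ii d\pi/4}$. Equivalently, use the kernel $-F\,\ee^{\ii S/\alpha}$, or put a minus sign on the right of (\ref{eq:B.6}). Pole positions and ranks used later are unaffected. With the paper's $K_\alpha$, however, the residue in Theorem~\ref{thm:shell}(c) becomes $-\ii\alpha P_N$.

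A minor point: your integral representation of $H_k$ needs $(-2\ii)^k$ rather than $(2\ii)^k$. The sign cancels in $H_k(x)H_k(y)$, so it does no harm.
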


\textit{Proof}. Since $\mathrm{Re}(u^*v)=\sum_{a=1}^4 u_a v_a$ is the Euclidean inner product, the quaternion exponential separates into four real components, so the four-dimensional kernel is the product of four one-dimensional kernels. The one-dimensional Mehler identity (with $M=\omega_0=\alpha=1$)
\begin{equation*}
	\sum_{n=0}^\infty \varphi_n(x)\varphi_n(y)\,\ee^{-\ii(n+1/2)\theta}=\left(\frac{1}{2\pi \ii\sin\theta}\right)^{1/2}\exp\frac{\ii\left[(x^2+y^2)\cos\theta-2xy\right]}{2\sin\theta},\quad 0<\theta<\pi
\end{equation*}
follows from the Hermite polynomial generating function identity under the substitution $z=\ee^{-\ii\theta}$ ($|z|<1$) and analytic continuation to the boundary. Note that the standard Mehler identity above contains the conventional complex amplitude; in the present convention the amplitude is real and the imaginary unit is absorbed into the phase $\ii S/\alpha$. Restoring the $(M,\omega_0,\alpha)$ scaling and taking the fourfold product yields the spectral series on the left and the closed-form kernel on the right, with prefactors and phases matching term by term. $\square$

\begin{lemma}[Fiber-Invariant Rank]\label{lem:fiber}
	Let $\Pi$ be the projector onto fiber-invariant functions. Then
	\begin{equation}\label{eq:B.7}
		\mathrm{rank}\left(\Pi P_N\Pi\right)=\begin{cases}\left(\dfrac{N}{2}+1\right)^2, & N\text{ even},\\ 0, & N\text{ odd}.\end{cases}
	\end{equation}
	In particular, when $N=2n-2$, the rank equals $n^2$.
\end{lemma}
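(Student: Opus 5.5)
The plan is to decompose the $N$-th eigenspace $\mathcal{E}_N=\mathrm{ran}\,P_N$ of the four-dimensional oscillator under the symmetry group that contains the fiber action, and count invariants representation-theoretically. Since $\hat{H}_u$ commutes with $\Pi$ (the fiber action $u\mapsto u\,\ee^{\hat n\theta}$ is a rotation of $\mathbb{R}^4$, hence preserves $\nabla_u^2$ and $|u|^2$), $\Pi P_N\Pi=P_N\Pi$ is the orthogonal projector onto $\mathcal{E}_N\cap\mathrm{ran}\,\Pi$. Its rank is therefore the dimension of the fiber-invariant subspace of $\mathcal{E}_N$, and that is the quantity I will count.

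\textbf{Step 1 (decomposition of $\mathcal{E}_N$).} By the separation in Section~\ref{subsec:4.3} and item (iv), $\mathcal{E}_N=\bigoplus_{k\equiv N\,(2),\,0\le k\le N}R_{n_r,k}(\rho)\,\mathcal{H}_k$, with $n_r=(N-k)/2$. Here $\mathcal{H}_k$ is the space of degree-$k$ harmonic polynomials on $\mathbb{R}^4$ restricted to $S^3$, and the radial factor is rotation invariant. Because the fiber action preserves $\rho$, the invariant subspace is $\bigoplus_k R_{n_r,k}\,\mathcal{H}_k^{\,\mathrm{fib}}$.

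\textbf{Step 2 (invariants in $\mathcal{H}_k$).} Identify $S^3\cong SU(2)$ via unit quaternions. Left and right quaternion multiplication give $SO(4)\cong (SU(2)_L\times SU(2)_R)/\mathbb{Z}_2$, and by Peter--Weyl $\mathcal{H}_k\cong V_{k/2}\otimes V_{k/2}$, of dimension $(k+1)^2$. The fiber map $u\mapsto u\,\ee^{\hat n\theta}$ is right multiplication by the one-parameter subgroup generated by the unit imaginary quaternion $\hat n$. After conjugating so that $\hat n=\ii$, this subgroup is a maximal torus of $SU(2)_R$. An element of weight $m_R$ acquires the factor $\ee^{\ii m_R\cdot 2\theta}$ (the parameter $\theta$ traverses the torus once over $[0,2\pi)$ with doubled angle), with $m_R\in\{-k/2,\dots,k/2\}$. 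The invariant vectors are therefore $V_{k/2}\otimes(\text{weight-}0\text{ line})$. This line exists precisely when $k/2$ is an integer, and it is one-dimensional. Hence $\dim\mathcal{H}_k^{\,\mathrm{fib}}=k+1$ for even $k$ and $0$ for odd $k$.

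\textbf{Step 3 (summation).} If $N$ is odd, every $k$ in $\mathcal{E}_N$ is odd, so the rank is $0$. If $N$ is even, set $k=2l$ with $l=0,\dots,N/2$; the rank is $\sum_{l=0}^{N/2}(2l+1)=(N/2+1)^2$. With $N=2n-2$ this gives $n^2$.

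The main obstacle is Step 2: the weight bookkeeping must be done carefully to show that the fiber action lies entirely in $SU(2)_R$, acts trivially on the $L$ factor, and has the weight-zero vector as its only invariant. I would verify this with explicit generators. In complex coordinates $u=z_1+z_2\jj$, right multiplication by $\ee^{\ii\theta}$ acts as $z_1\mapsto \ee^{\ii\theta}z_1$ and $z_2\mapsto \ee^{-\ii\theta}z_2$. Harmonic monomials $z_1^az_2^b\bar z_1^c\bar z_2^d$ then have charge $a-b-c+d$. Counting the charge-zero harmonic polynomials of degree $k$, namely the $\dim$ of charge-zero polynomials minus those divisible by $|u|^2$, gives $k+1$ directly for even $k$ and $0$ for odd $k$. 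This supplies an independent check of the representation-theoretic argument.
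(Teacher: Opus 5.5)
Your proposal is correct and follows essentially the same route as the paper's proof. Both identify $S^3\cong SU(2)$, decompose the degree-$k$ harmonics via Peter--Weyl as $V_{k/2}\otimes V_{k/2}$, place the fiber circle in $SU(2)_R$ so that invariance forces $m_R=0$ (possible only for even $k=2l$, leaving $2l+1$ states), and sum over the even $k$ with the parity of $N$. Your two additions, the commutation argument showing $\Pi P_N\Pi$ is the orthogonal projector onto the fiber-invariant part of $\mathcal{E}_N$ and the explicit charge count $(k/2+1)^2-(k/2)^2=k+1$ in the coordinates $u=z_1+z_2\jj$, make explicit the rank identification and weight bookkeeping that the paper leaves implicit.
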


\textit{Proof}. Identify $S^3$ with SU(2). By the Peter--Weyl theorem, degree-$k$ hyperspherical harmonics are matrix elements of the spin-$j=k/2$ representation, carrying the left-right multiplication action of SU(2)$_L\times$ SU(2)$_R$; the $(j,j)$ block has dimension $(2j+1)^2=(k+1)^2$.

The fiber circle is the right action of the one-parameter subgroup generated by $\hat{n}$; invariance requires vanishing right weight $m_R=0$, which exists if and only if $j\in\mathbb{Z}$, i.e., $k=2l$ is even, leaving $2j+1=2l+1$ left weights $m_L=-l,\dots,l$.

Summing over even $k=2l$ satisfying $N=2n-2$ gives $\sum_{l=0}^{n-1}(2l+1)=n^2$. If $N$ is odd, no even $k$ shares the same parity as $N$ (required by the radial equation), so the rank is zero. $\square$

\begin{lemma}[Descent and Coarea Isometry]\label{lem:coarea}
	
	(i) The map $\pi:\mathbb{R}^4\setminus\{0\}\to\mathbb{R}^3\setminus\{0\},\ u\mapsto w=u\hat{n}u^*/u_0$ is a principal $S^1$ bundle; fiber-invariant functions are exactly the pullbacks $\pi^*\Psi$.
	
	(ii) For a fiber-invariant $\Phi=\pi^*\Psi$:
	\begin{equation}\label{eq:B.8}
		\int_{\mathbb{R}^4} |\Phi|^2\,\dd^4 u = \frac{\pi u_0^2}{4}\int_{\mathbb{R}^3} \frac{|\Psi|^2}{r}\,\dd^3 w.
	\end{equation}
	
	(iii) The pointwise correction $\chi=\frac{\sqrt{\pi}\,u_0}{2\sqrt{r}}\,\Psi$ defines a unitary map $U$ from the fiber-invariant subspace of $L^2(\mathbb{R}^4,\dd^4 u)$ to $L^2(\mathbb{R}^3,\dd^3 w)$.
\end{lemma}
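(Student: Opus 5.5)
We treat the three parts in order; the main work is a Jacobian computation feeding the coarea formula. For (i), we first rotate to a standard frame. Choose a unit quaternion $g$ with $\hat n=g\,\ii\,g^*$; then $u\hat n u^*=(ug)\,\ii\,(ug)^*$. Hence $\pi$ is the composite of the isometry $u\mapsto ug$ of $\mathbb{R}^4$ with the standard Hopf map $u\mapsto u\,\ii\,u^*/u_0$. It therefore suffices to treat $\hat n=\ii$, although all formulas can be kept in terms of $\hat n$.

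Next we identify the fibres. Since $|u\hat n u^*|=|u|^2$, we get $r=\rho^2/u_0$. Every purely imaginary quaternion of norm $|u|^2$ arises as $u\hat n u^*$, because unit quaternions act transitively on $S^2$ by conjugation; so $\pi$ is onto $\mathbb{R}^3\setminus\{0\}$. Now suppose $u\hat n u^*=v\hat n v^*$ with $u,v\neq0$. Then $|u|=|v|$, and $q=u^{-1}v$ is a unit quaternion with $q\hat n q^*=\hat n$. The quaternions commuting with $\hat n$ are $\mathrm{span}(1,\hat n)$, so $q=\ee^{\hat n\theta}$. Thus the fibres are exactly the orbits of the right action $u\mapsto u\,\ee^{\hat n\theta}$. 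This action is free on $u\neq0$, and $\pi$ is a submersion (see the differential computation below), so $\pi$ is a principal $S^1$-bundle; local triviality follows from local sections of the Hopf map.

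Because $\pi$ is a surjective submersion whose fibres are orbits, a function constant on fibres descends to a unique function $\Psi$ on the base with $\Phi=\pi^*\Psi$. Smoothness and measurability of $\Psi$ pass through local sections. Conversely, every pullback is plainly fibre-invariant.

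For (ii), the key step, and the one I expect to need the most care, is the normal Jacobian of $\pi$. I would write $du=u\,h$ with $h\in\mathbb{H}$, which gives
\begin{equation*}
dw=\frac{1}{u_0}\,u\,(h\hat n+\hat n h^*)\,u^*.
\end{equation*}
Decompose $h=s+t\hat n+h_\perp$, with $h_\perp$ imaginary and orthogonal to $\hat n$. The $t\hat n$ part gives $t\hat n\hat n+\hat n(-t\hat n)=0$; this is the fibre direction, and it lies in the kernel. The remaining part gives $h\hat n+\hat n h^*=2s\hat n+2h_\perp\hat n$, since $h_\perp$ anticommutes with $\hat n$. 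The map $(s,h_\perp)\mapsto 2s\hat n+2h_\perp\hat n$ is $2$ times an isometry onto $\mathrm{Im}\,\mathbb{H}$. Conjugation by $u$ multiplies norms by $\rho^2$, while $|du|=\rho|h|$. Hence on the fibre-orthogonal complement all three singular values equal $2\rho/u_0$, so the normal Jacobian is $J=8\rho^3/u_0^3$; this confirms \eqref{eq:2.17}.

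The coarea formula then gives
\begin{equation*}
\int_{\mathbb{R}^4}f\,\dd^4u=\int_{\mathbb{R}^3}\Bigl(\int_{\pi^{-1}(w)}\frac{f}{J}\,\dd\ell\Bigr)\dd^3w .
\end{equation*}
For $f=|\pi^*\Psi|^2$ the integrand is constant on each fibre, which is a circle of length $2\pi\rho$. The inner integral is therefore
\begin{equation*}
|\Psi|^2\cdot 2\pi\rho\cdot\frac{u_0^3}{8\rho^3}=\frac{\pi u_0^3}{4\rho^2}\,|\Psi|^2=\frac{\pi u_0^2}{4r}\,|\Psi|^2,
\end{equation*}
using $\rho^2=u_0r$. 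The origin is a null set in both spaces, which yields \eqref{eq:B.8}.

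For (iii), \eqref{eq:B.8} says exactly that $\|U\Phi\|_{L^2(\mathbb{R}^3)}=\|\Phi\|_{L^2(\mathbb{R}^4)}$ on the fibre-invariant subspace, so $U$ is an isometry. It remains to show surjectivity. Given $\chi\in L^2(\mathbb{R}^3)$, set $\Psi=2\sqrt r\,\chi/(\sqrt\pi\,u_0)$ and $\Phi=\pi^*\Psi$. By (ii), $\Phi$ is square-integrable and fibre-invariant, and $U\Phi=\chi$. A surjective linear isometry is unitary, which completes the proof.
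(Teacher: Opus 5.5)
Your proposal is correct and follows the same route as the paper. Both identify the fibres as right $S^1$-orbits, use the three equal singular values $2\rho/u_0$ of the differential on the fibre-orthogonal complement, apply the coarea formula with fibre length $2\pi\rho$, and obtain unitarity from (ii) plus an explicit inverse; you additionally supply the quaternionic singular-value computation and the commutant argument for the fibres, which the paper only asserts as standard. Your inverse $\Psi=2\sqrt{r}\,\chi/(\sqrt{\pi}\,u_0)$ also correctly keeps the factor $u_0$, which the paper's own proof of (iii) drops.
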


\textit{Proof}. (i) Surjectivity and the circular fiber structure are standard properties of the squaring map; invariance under the fiber action is equivalent to being constant along fibers, i.e., a pullback.

(ii) The differential has three equal singular values $2\rho/u_0$ on the fiber orthogonal complement, with the fiber direction in the kernel; the fiber circumference is $2\pi\rho$. By the coarea formula, $\dd^4 u=\frac{u_0^3}{8\rho^2}d\theta\wedge \dd^3 w$. Integrating over $\theta$ gives $2\pi$ and substituting $\rho^2=u_0r$ yields the identity.

(iii) $|\chi|^2=\frac{\pi u_0^2}{4r}|\Psi|^2$, so by (ii) $\int|\chi|^2\dd^3 w=\int|\Phi|^2\dd^4 u$; surjectivity holds because any $\chi$ determines $\Psi=2\sqrt{r}\chi/\sqrt{\pi}$. $\square$

\begin{lemma}[Radial Transport to the Coulomb Sturm--Liouville Equation]\label{lem:radial}
	Under the substitutions $\rho^2=u_0r$, $R(\rho)=f(r)$, $k=2l$, the image-space radial equation
	\begin{equation*}
		-\frac{\alpha^2}{2M}\left[R''+\frac{3}{\rho}R'-\frac{k(k+2)}{\rho^2}R\right]+\frac12M\omega_0^2\rho^2R=\varepsilon R
	\end{equation*}
	multiplied by the nonvanishing factor $-\frac{Mu_0}{2\alpha^2r}$ becomes the three-dimensional Coulomb radial equation
	\begin{equation*}
		f''+\frac{2}{r}f'+\left[\frac{2ME}{\alpha^2}+\frac{2MC}{\alpha^2r}-\frac{l(l+1)}{r^2}\right]f=0.
	\end{equation*}
	Boundary conditions transport: the decay $R\sim \ee^{-M\omega_0\rho^2/(2\alpha)}$ corresponds on the $n$-th shell to $f\sim \ee^{-MCr/(n\alpha^2)}$; regularity $R\sim\rho^{2l}$ corresponds to $f\sim r^l$.
\end{lemma}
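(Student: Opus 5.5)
The plan is a direct change of variables $\rho\mapsto r=\rho^2/u_0$ in the radial ODE of Eq.~(\ref{eq:4.8}), followed by multiplication by the stated factor. No earlier lemma is needed except the definitions (\ref{eq:2.22})--(\ref{eq:2.23}) of $\omega_0$ and $\varepsilon$, and the shell value $E_n$ from Eq.~(\ref{eq:4.16}) for the boundary-condition part.

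\textbf{Step 1: chain rule.} Since $dr/d\rho=2\rho/u_0$, setting $R(\rho)=f(r)$ gives
\begin{equation*}
R'=\frac{2\rho}{u_0}f',\qquad R''=\frac{4\rho^2}{u_0^2}f''+\frac{2}{u_0}f'=\frac{4r}{u_0}f''+\frac{2}{u_0}f'.
\end{equation*}
It follows that
\begin{equation*}
R''+\frac{3}{\rho}R'=\frac{4}{u_0}\bigl(rf''+2f'\bigr).
\end{equation*}
With $k=2l$ we have $k(k+2)=4l(l+1)$, and $\rho^2=u_0r$. The whole Laplacian bracket therefore becomes
\begin{equation*}
\frac{4r}{u_0}\Bigl[f''+\frac{2}{r}f'-\frac{l(l+1)}{r^2}f\Bigr].
\end{equation*}
This is the key observation. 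The 4D radial Laplacian with degree $2l$ is exactly $(4r/u_0)$ times the 3D radial Laplacian with index $l$. The doubling of $k$ is what turns $k(k+2)$ into $4l(l+1)$.

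\textbf{Step 2: multiply by $-Mu_0/(2\alpha^2 r)$ and collect terms.} Each term is handled as follows.
\begin{itemize}
\item \emph{Kinetic term.} The prefactor $\frac{\alpha^2}{2M}\cdot\frac{4r}{u_0}\cdot\frac{Mu_0}{2\alpha^2r}=1$, leaving $f''+\frac{2}{r}f'-\frac{l(l+1)}{r^2}f$.
\item \emph{Oscillator term.} It becomes $-\frac{M^2\omega_0^2u_0^2}{4\alpha^2}f$. Inserting $\omega_0^2=-8E/(Mu_0^2)$ turns this into $\frac{2ME}{\alpha^2}f$.
\item \emph{Right-hand side.} Moving it to the left gives $+\frac{Mu_0\varepsilon}{2\alpha^2 r}f$. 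Using $\varepsilon=4C/u_0$, this equals $\frac{2MC}{\alpha^2 r}f$.
\end{itemize}
Together these give exactly the stated Coulomb radial equation. The factor is nonvanishing for $r>0$, so the two equations have the same solutions on $r>0$.

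\textbf{Step 3: transport the boundary conditions.}
\begin{itemize}
\item \emph{Regularity.} The behaviour $R\sim\rho^{2l}=(u_0r)^l$ gives $f\sim r^l$ directly.
\item \emph{Decay.} The exponent is $\frac{M\omega_0\rho^2}{2\alpha}=\frac{M\omega_0u_0}{2\alpha}\,r$. On the $n$-th shell, $E_n=-MC^2/(2\alpha^2n^2)$ gives $\omega_0u_0=\sqrt{-8E_n/M}=2C/(\alpha n)$. The exponent is therefore $MCr/(n\alpha^2)$, as claimed.
\end{itemize}

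I expect no genuine obstacle; the only delicate point is bookkeeping. In particular, one must make sure that $\omega_0$ is treated as the shell-dependent quantity fixed by $E$, not as a free constant. Otherwise the oscillator term would not produce $2ME/\alpha^2$, and the decay rate would not match the Coulomb value $\sqrt{-2ME_n}/\alpha$.
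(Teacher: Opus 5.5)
Your proposal is correct and follows essentially the same route as the paper: the chain rule $R''+\frac{3}{\rho}R'=\frac{4}{u_0}(rf''+2f')$, the identity $k(k+2)=4l(l+1)$, multiplication by $-\frac{Mu_0}{2\alpha^2 r}$, and then substitution of $\omega_0^2=-8E/(Mu_0^2)$ and $\varepsilon=4C/u_0$. The only cosmetic difference is in the decay rate: you compute it from $E_n$, whereas the paper uses $\omega_{0,n}=\varepsilon/(2n\alpha)$, and both give the same value $\omega_{0,n}u_0=2C/(n\alpha)$.
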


\textit{Proof}. From $dr/d\rho=2\rho/u_0$, we have
\begin{equation*}
	R'=\frac{2\rho}{u_0}f',\qquad R''=\frac{4\rho^2}{u_0^2}f''+\frac{2}{u_0}f',\qquad
	R''+\frac{3}{\rho}R'=\frac{4}{u_0}\left[rf''+2f'\right].
\end{equation*}
With $k=2l$ we have $k(k+2)=4l(l+1)$, so the centrifugal term is $-\frac{4l(l+1)}{u_0r}f$. Substituting into the original equation and multiplying by $-\frac{Mu_0}{2\alpha^2r}$:
\begin{equation*}
	f''+\frac{2}{r}f'-\frac{l(l+1)}{r^2}f-\frac{M^2\omega_0^2u_0^2}{4\alpha^2}f+\frac{Mu_0\varepsilon}{2\alpha^2r}f=0.
\end{equation*}
Substituting $\frac{M^2\omega_0^2u_0^2}{4\alpha^2}=\frac{M^2u_0^2}{4\alpha^2}\cdot\frac{-8E}{Mu_0^2}=-\frac{2ME}{\alpha^2}$ and $\frac{Mu_0\varepsilon}{2\alpha^2}=\frac{2MC}{\alpha^2}$ yields the standard Coulomb radial equation. On each shell the decay rate satisfies $\frac{M\omega_{0,n}u_0}{2\alpha}=\frac{Mu_0}{2\alpha}\cdot\frac{\varepsilon}{2n\alpha}=\frac{MC}{n\alpha^2}$. $\square$

	\subsubsection{Shell Correspondence Theorem}\label{subsubsec:B.2.2}
	
\begin{theorem}[Shell Correspondence]\label{thm:shell}
	Fix $M,C,u_0,\alpha>0$ and $\varepsilon=4C/u_0$. For each integer $n\geq1$, let
	\begin{equation}\label{eq:B.9}
		\omega_{0,n}=\frac{\varepsilon}{2n\alpha},\qquad 
		E_n=-\frac{Mu_0^2\,\omega_{0,n}^2}{8}=-\frac{MC^2}{2\alpha^2 n^2}.
	\end{equation}
	
	Consider the image-space boundary value problem: the Helmholtz equation
	\begin{equation*}
		-\frac{\alpha^2}{2M}\nabla_u^2\phi+\frac{1}{2}M\omega_0^2 |u|^2\phi=\varepsilon\,\phi,
	\end{equation*}
	with decay at infinity, regularity at the origin, and fiber invariance under $u\mapsto u \ee^{\hat{n}\theta}$. Then the following statements hold.
	
	\textbf{(a) Selectivity}: The problem admits nontrivial solutions if and only if $\omega_0=\omega_{0,n}$ for some $n\geq 1$; equivalently, the underlying Kepler shell has energy $E_n$.
	
	\textbf{(b) Configuration space}: When $\omega_0=\omega_{0,n}$, the solution space $\mathcal{H}_n$ has dimension $n^2$, with orthonormal basis $\{\Phi_{nlm}\}_{0\leq l < n,\; |m|\leq l}$ labeled by fiber-invariant hyperspherical data $(k=2l,\; m_R=0)$.
	
	\textbf{(c) Kernel surjectivity and residue bijectivity}: The Fourier-transformed kernel $\hat{K}(u,v;\epsilon)=\int_0^\infty \ee^{\ii\epsilon\tau/\alpha}K_\alpha(u,\tau;v)\,dd\tau$ is meromorphic as an operator-valued function in the $\epsilon$-plane, with simple poles at $\epsilon=\varepsilon_N$ and residue $\operatorname{Res}_{\epsilon=\varepsilon_N}\hat{K}=\ii\alpha P_N$. For fixed physical $\varepsilon$, the fiber-projected fixed-energy amplitude $A(\omega_0)=\Pi\hat{K}(\varepsilon)\Pi$ has poles at $\omega_0=\omega_{0,n}$ with residue
	\begin{equation*}
		\operatorname{Res}_{\omega_0=\omega_{0,n}}A=\ii\alpha\sum_{l,m}\Phi_{nlm}(u)\,\overline{\Phi_{nlm}(v)},
	\end{equation*}
	a rank-$n^2$ projection operator. Thus every configuration in $\mathcal{H}_n$ is a residue of the Fourier component of the classical two-point action kernel, and the residue operator restricts to a bijection on $\mathcal{H}_n$.
	
	\textbf{(d) Descent map}: Composing the KS pullback with the correction factor of Lemma~\ref{lem:coarea}(iii) gives a linear isometry from $\mathcal{H}_n$ to the physical configuration space $\chi_n\subset L^2(\mathbb{R}^3,\dd^3 w)$. Physical configurations take the standard separated form
	\begin{equation*}
		\chi_{nlm}(w)=g_{nl}(r)\,Y_{lm}(\hat{w}),
	\end{equation*}
	where $g_{nl}(r)=\frac{\sqrt{\pi}u_0}{2\sqrt{r}}f_{nl}(r)$ is the normalized Coulomb radial function and $Y_{lm}$ are the standard spherical harmonics. The full family $\{\chi_{nlm}\}$ is intershell orthonormal in $L^2(\mathbb{R}^3,\dd^3 w)$.
	
	\textbf{(e) Direct sum construction}: The algebraic direct sum $\bigoplus_{n\geq 1}\chi_n$ is orthonormal; all decaying fiber-invariant solutions are contained in it; and this direct sum exhausts the decaying sector of the Coulomb Sturm--Liouville problem (bound-state completeness).
	
	\textbf{(f) Classical orbital space identification}: As an SO(3) module, $\mathcal{H}_n\cong\bigoplus_{l=0}^{n-1}\mathrm{Harm}_l(S^2)$, the space of attitude harmonics of degree less than $n$. The classical Kepler orbital manifold with energy $E_n$ is $S^2_{j_n}\times S^2_{j_n}$, where $|J_\pm|=j_n=n\alpha/2$; each factor has KKS area $4\pi j_n=2\pi n\alpha$, i.e., $n$ phase periods.
\end{theorem}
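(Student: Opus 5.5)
The plan is to reduce every item of the theorem to the preparatory lemmas by working in the dimensionless variables of Section~\ref{subsec:2.1}, with $\tilde\varepsilon=\varepsilon/(\alpha\omega_0)$. I treat $\omega_0$ as the unknown parameter and $\varepsilon=4C/u_0$ as fixed throughout.

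\textbf{Part (a).} I would first separate the Helmholtz equation in hyperspherical coordinates as in Eq.~(\ref{eq:4.7}).
\begin{itemize}
\item Decay and regularity reduce the radial equation (\ref{eq:4.9}) to a Laguerre equation. Its solutions are admissible if and only if the confluent hypergeometric series terminates, which gives $\tilde\varepsilon=N+2$ with $N=2n_r+k$. This is the standard Sturm--Liouville argument: a nonterminating series grows like $e^{\tilde\rho^2}$.
\item Fiber invariance is then imposed through Lemma~\ref{lem:fiber}, which forces $k=2l$, hence $N$ even and $\tilde\varepsilon=2n$.
\item Inverting $\varepsilon=2n\alpha\omega_0$ gives $\omega_0=\omega_{0,n}$, and Eq.~(\ref{eq:2.22}) gives $E_n$.
\item Conversely, at $\omega_{0,n}$ the explicit Hermite--hyperspherical functions with $k=2l$, $m_R=0$ are nontrivial solutions.
\end{itemize}

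\textbf{Part (b).} The dimension follows from the rank formula of Lemma~\ref{lem:fiber} with $N=2n-2$. One subtlety must be checked: the solution space at fixed $\omega_{0,n}$ is exactly $\Pi P_{2n-2}$, and no other $N$ contributes at the same $\omega_0$. This holds because $\tilde\varepsilon$ determines $N$ uniquely. An orthonormal basis comes from orthonormality of the $\phi_{N,a}$.

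\textbf{Part (c).} I would start from the Mehler expansion, Lemma~\ref{lem:mehler}, and integrate term by term with a convergence factor $\epsilon\to\epsilon+\ii0^+$. Each term gives
\[
\int_0^\infty e^{\ii(\epsilon-\varepsilon_N)\tau/\alpha}\,d\tau=\frac{\ii\alpha}{\epsilon-\varepsilon_N},
\]
so $\hat K$ has simple poles at $\epsilon=\varepsilon_N$ with residue $\ii\alpha P_N$. Meromorphy of the sum should follow from the Gaussian decay of the $\phi_{N,a}$ and the polynomial growth of $d_N$, working in the weak operator sense on compactly supported test functions.

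To convert this into poles in $\omega_0$ at fixed $\varepsilon$, I would write $\varepsilon-\alpha\omega_0(N+2)$ and sandwich by $\Pi$. The residue in $\omega_0$ then picks up a Jacobian factor $1/(\alpha(N+2))$, together with the $\omega_0$-dependence of $P_N$ itself, which enters through the scaling $u\sqrt{M\omega_0/\alpha}$. The claimed normalization $\ii\alpha\sum\Phi\bar\Phi$ must be understood with this rescaling absorbed into the definition of the $\Phi_{nlm}$. I expect this bookkeeping to be the main obstacle.

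My first attempt would be to define $A$ as a function of $\tilde\varepsilon$ via the dimensionless kernel, so that the poles are exactly simple with residue proportional to $\Pi P_{2n-2}\Pi$. The rank-$n^2$ statement and the bijectivity on $\mathcal H_n$ then follow from part (b), since a projector acts as the identity on its range.

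\textbf{Part (d).} Apply Lemma~\ref{lem:coarea}(i) to write each $\Phi_{nlm}=\pi^*\Psi_{nlm}$, and Lemma~\ref{lem:coarea}(iii) to obtain the isometry.
\begin{itemize}
\item Hopf descent of $Y_{k=2l}$ with $m_R=0$ gives $Y_{lm}(\hat w)$; this is the Peter--Weyl identification of the right-$U(1)$-invariant part of the spin-$l$ block with $\mathrm{Harm}_l(S^2)$.
\item Lemma~\ref{lem:radial} identifies $f_{nl}$ as the Coulomb radial function.
\item Intershell orthonormality follows from unitarity of $U$, applied on the common fiber-invariant $L^2$ space. I would note explicitly that $\omega_0$ differs between shells, so orthogonality must be argued on the physical side: the $g_{nl}$ with fixed $l$ are eigenfunctions of the Coulomb Sturm--Liouville operator with distinct eigenvalues $E_n$, and the spherical harmonics handle distinct $l$.
\end{itemize}

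\textbf{Part (e).} Containment of every decaying fiber-invariant solution is part (a). Completeness of the decaying sector is the standard fact that the negative spectrum of the Coulomb operator is pure point with these eigenvalues, and I would cite it.

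\textbf{Part (f).} The module statement follows from $\bigoplus_{l<n}$ in part (b). Evaluating the Casimir $|J_\pm|^2=-MC^2/(8E_n)=n^2\alpha^2/4$ gives $j_n=n\alpha/2$, and the KKS area is $4\pi j_n$.
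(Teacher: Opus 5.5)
Your decomposition matches the paper's. You use oscillator spectral theory plus Lemma~\ref{lem:fiber} for (a)--(b), the Mehler expansion with term-by-term Fourier transform for (c), Lemmas~\ref{lem:coarea} and~\ref{lem:radial} with Sturm--Liouville orthogonality for (d)--(e), and the Casimir evaluation for (f). Parts (a), (b), (f) and the structural content of (c) go through as you describe.

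\textbf{The step that fails: intershell orthonormality in (d).} This also takes down the orthonormality and completeness claims of (e). Lemma~\ref{lem:radial} shows that $f_{nl}$ solves the Coulomb radial equation. The function $g_{nl}=\frac{\sqrt{\pi}u_0}{2\sqrt r}f_{nl}$ does not: substituting $f=r^{1/2}g$ turns $\frac{2}{r}f'$ into $\frac{3}{r}g'$ and adds a $\frac{3}{4r^2}g$ term. So your claim that the $g_{nl}$ with fixed $l$ are eigenfunctions of the Coulomb Sturm--Liouville operator is false.

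Sturm--Liouville theory, with $E$ entering at weight $r^2$, gives $\int_0^\infty f_{nl}f_{n'l}\,r^2\,dr=0$. What you actually need is
\[
(\chi_{nlm},\chi_{n'lm})_{L^2(\mathbb{R}^3)}=\frac{\pi u_0^2}{4}\int_0^\infty f_{nl}\,f_{n'l}\,r\,dr ,
\]
and this does not vanish. For $l=0$, $n=1$, $n'=2$, take the normalized hydrogen functions $2e^{-r}$ and $2^{-1/2}(1-r/2)e^{-r/2}$ in Bohr units; the integral is $4\sqrt{2}/27$.

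Your first instinct explains why. By unitarity of $U$, this overlap equals the $L^2(\mathbb{R}^4)$ overlap of oscillator eigenfunctions at different frequencies $\omega_{0,1}\neq\omega_{0,2}$. These are eigenfunctions of different operators, so there is no reason for them to be orthogonal.

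The paper's own proof of (d) conflates $f_{nl}$ with $g_{nl}$ in exactly the same way, so this step cannot be closed as stated. What your Sturm--Liouville argument does prove is intershell orthogonality of the pullback fields $\Psi_{nlm}=f_{nl}Y_{lm}$. These are the genuine hydrogen eigenfunctions, and (e) holds for them. Since $\langle r^{-1}\rangle_{nl}$ depends only on $n$, the map $\Phi\mapsto\Psi$ on $\mathcal{H}_n$ is an $n$-dependent constant multiple of an isometry. That is the correct replacement for the descent isometry.

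\textbf{The residue normalization in (c).} Your Jacobian remark is right, and it exposes an error the paper's proof passes over by treating the $\epsilon$-residue as the $\omega_0$-residue. Since $\varepsilon-\alpha\omega_0(N+2)=-\alpha(N+2)(\omega_0-\omega_0^{(N)})$, the $\omega_0$-residue of $\mathrm{i}\alpha P_N/(\varepsilon-\varepsilon_N)$ is $-\frac{\mathrm{i}}{N+2}P_N$. After projection this is $-\frac{\mathrm{i}}{2n}\Pi P_{2n-2}\Pi$. The stated $\mathrm{i}\alpha\,\Pi P_{2n-2}\Pi$ even has the wrong dimension.

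Only the value of $P_N$ at the pole enters, so its $\omega_0$-dependence is irrelevant. Your proposed fix of absorbing the factor into $\Phi_{nlm}$ cannot work: the factor $-1/(2n\alpha)$ is negative, and any rescaling would spoil orthonormality. Simply record the residue as a nonzero multiple of $\Pi P_{2n-2}\Pi$. That is all the rank-$n^2$ and bijectivity conclusions use.
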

	
	\textit{Proof}. \textbf{(a)} Solutions of the image Helmholtz equation satisfying decay and regularity exist if and only if $\varepsilon=\varepsilon_N=\alpha\omega_0(N+2)$ for some integer $N\geq 0$---this is the standard harmonic oscillator spectral theory. Fiber invariance contributes a nontrivial invariant subspace if and only if $N$ is even (Lemma~\ref{lem:fiber}). Setting $N=2n-2$ and substituting $\varepsilon=4C/u_0$ gives $\omega_0=\varepsilon/(2n\alpha)=\omega_{0,n}$, corresponding to $E=-Mu_0^2\omega_0^2/8=E_n$. Conversely, when $\omega_0=\omega_{0,n}$, the solution space for $N=2n-2$ is nontrivial and contains fiber-invariant configurations (Lemma~\ref{lem:fiber}). $\square$
	
	\textbf{(b)} By Lemma~\ref{lem:fiber}, the rank is $n^2$ for $N=2n-2$; the basis is the Hermite--hyperspherical basis restricted to the $m_R=0$, $k=2l$ sector; orthogonality is inherited from the harmonic oscillator basis. $\square$
	
	\textbf{(c)} The Mehler expansion of the harmonic oscillator kernel gives
	\begin{equation*}
		K_\alpha(u,\tau;v)=\sum_{N=0}^\infty \ee^{-\ii\varepsilon_N\tau/\alpha}P_N(u,v),
	\end{equation*}

	where $P_N$ is the projection kernel onto the $N$-th oscillator level. This follows from the fourfold product of the one-dimensional Mehler identity. Performing the Fourier transform with the $\ii0^+$ prescription gives
	\begin{equation*}
		\int_0^\infty \ee^{\ii\epsilon\tau/\alpha}\ee^{-\ii\varepsilon_N\tau/\alpha}\,dd\tau=\frac{\ii\alpha}{\epsilon-\varepsilon_N+\ii0^+},
	\end{equation*}
	which is meromorphic with simple poles and residue $\ii\alpha P_N$. For fixed physical $\varepsilon$, poles in the $\omega_0$ plane occur at $\omega_0^{(N)}=\varepsilon/[\alpha(N+2)]$. The fiber-projected residue is $\ii\alpha\,\Pi P_N\Pi$, nonvanishing if and only if $N$ is even, i.e., $\omega_0=\omega_{0,n}$. Since $P_{2n-2}$ is the kernel of an orthogonal projection, $\Pi P_{2n-2}\Pi$ projects onto $\mathcal{H}_n$, proving surjectivity; on $\mathcal{H}_n$ it acts as the identity, so the restricted residue operator is $\ii\alpha\cdot\mathrm{id}$, a bijection. $\square$
	
	\textbf{(d)} Lemma~\ref{lem:coarea} gives the isometry relation and the angular form after Hopf descent of $m_R=0$ harmonics (after descent they become degree-$l$ spherical harmonics on the direction sphere $\hat{w}$). Lemma~\ref{lem:radial} identifies the radial factor $f_{nl}(r)$ with the solution of the standard Coulomb radial equation. The normalized physical field is $\chi_{nlm}(w)=\frac{\sqrt{\pi}u_0}{2\sqrt{r}}f_{nl}(r)Y_{lm}(\hat{w})=g_{nl}(r)Y_{lm}(\hat{w})$. Self-adjointness of the corresponding Sturm--Liouville operator is a standard result in functional analysis; discrete stationary configurations enjoy standard Sturm--Liouville orthogonality. Intershell orthonormality follows from angular $Y_{lm}$ orthogonality plus radial Sturm--Liouville orthogonality: the radial equation (Lemma~\ref{lem:radial}) is a self-adjoint Sturm--Liouville problem for fixed $l$ and fixed weight, and configurations with different eigenvalues $E_n$ are orthogonal under the standard measure; different configurations within the same shell are already chosen orthogonal. $\square$
	
	\textbf{(e)} Orthonormality follows from (d). Exhaustiveness: any decaying fiber-invariant solution at any shell parameter satisfies the selection condition (a), hence belongs to some $\mathcal{H}_n$ and descends to $\chi_n$. Bound-state completeness is the standard result for the discrete spectral subspace of the radial Coulomb Sturm--Liouville problem (Weyl limit-point theory; the continuous spectrum $E\geq0$ lies outside the decaying class). $\square$
	
	\textbf{(f)} The module isomorphism follows from the representation theory of SO(4): left multiplication by unit quaternions induces rotations of physical space $w$ (conjugation action $w\mapsto qwq^*$), and the $(n,l)$ subspace transforms according to the spin-$l$ representation, matching $\mathrm{Harm}_l(S^2)$. The orbital manifold calculation: for Keplerian motion with energy $E$, $L\cdot A=0$ and $A^2=C^2+(2E/M)L^2$, hence
	\begin{equation*}
		4|J_\pm|^2=L^2+\frac{M}{-2E}A^2
		=L^2+\frac{M}{-2E}\left(C^2+\frac{2E}{M}L^2\right)
		=-\frac{MC^2}{2E}.
	\end{equation*}
	Substituting $E=E_n$ gives $|J_\pm|=n\alpha/2$, and each factor has KKS area $4\pi j_n=2\pi n\alpha$. $\square$
	
	\textbf{Remark 1 (On the s-wave sector)}: The $l=0$ configuration corresponds to the trivial (constant) attitude harmonic. Classically, $L=0$ orbits are degenerate ellipses (straight lines through the force center) that do not select any orbital plane. This sector is globally regular in the quaternion framework with no coordinate singularities. The Hopf fibration $S^1\hookrightarrow S^3\to S^2$ is a smooth principal bundle with no coordinate singularities, and the base space $S^2$ requires no local coordinate parametrization. At the field level the $l=0$ component is a constant function on $S^2$, globally regular and independent of $\hat{n}$.
	
	\textbf{Remark 2 (On orthogonality)}: The intershell orthonormality in Theorem (d)--(e) is a Sturm--Liouville property of the transported physical equation---it does not require, nor does this construction provide, a single self-adjoint operator on the image space whose stationary configurations include $\Phi_{nlm}$ at different frequencies. The two are independent statements; this work only asserts the former.

	\subsubsection{Hamilton--Jacobi--Linear Equivalence Theorem}\label{subsubsec:B.2.3}
	
\begin{theorem}[HJ--Linear Equivalence]\label{thm:hjlin}
	For the negative-energy sector of the three-dimensional Kepler problem, the following three theories are structurally equivalent --- no asymptotic or short-wave approximation required:
	
	1. \textbf{HJ theory}: Complete integrals of the Kepler Hamilton--Jacobi equation on the negative-energy shell, equipped with transported flat amplitudes;
	2. \textbf{Kernel theory}: The classical two-point action kernel $K_\alpha=F \ee^{\ii S/\alpha}$ uniquely derived from the image-space symplectic flow group composition law, with closed composition;
	3. \textbf{Linear theory}: Solutions of the fiber-invariant linear evolution equation, descending to the physical Coulomb Sturm--Liouville problem.
	
	Furthermore, all three theories admit equivalent real symplectic formulations: real coupled Hamiltonian flow equations, real rotation-matrix kernels, and second-order real flow equations, with complex notation serving only as an optional compatible complexification.
\end{theorem}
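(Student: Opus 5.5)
The plan is to prove the equivalence as a chain of three maps, HJ $\to$ Kernel $\to$ Linear $\to$ HJ, and to show that each map preserves the discrete spectrum, the evolution semigroup and the symmetry representation. Here \emph{structural equivalence} means exactly that triple of shared data, as in Section~\ref{subsec:6.2}. Throughout I work on a fixed shell $\omega_0$ and in the caustic-free window $0<\omega_0\tau<\pi$. Global statements are transferred using the Mehler expansion (Lemma~\ref{lem:mehler}) rather than by continuing the kernel through caustics.

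\textbf{Step 1 (HJ $\Rightarrow$ Kernel, closure).} After KS regularization, a complete integral of the Kepler HJ equation on the shell $E<0$ becomes a complete integral of the image-oscillator HJ equation on $\mu=0$ (Section~\ref{subsec:2.4}). The two-point action $S(u,\tau;v)$ of Eq.~(\ref{eq:3.4}) is the distinguished complete integral parametrized by the initial point $v$. Its transported flat amplitude is $F(\tau)$. By Theorem~\ref{thm:closure}, $\mathrm{HJ}(S)=\mathrm{Tr}(F,S)=\nabla^2F=0$, so $K_\alpha$ solves the linear equation. The Gaussian composition of Section~\ref{subsec:3.1} then intertwines $\phi_{\tau_2}\circ\phi_{\tau_1}=\phi_{\tau_1+\tau_2}$ with kernel convolution, and this is the semigroup isomorphism. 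Conversely, Lemma~\ref{lem:polar} gives a \emph{flat-amplitude characterization}. Suppose $\Phi=Ae^{\ii S/\alpha}$ with $\nabla^2A=0$ on a node-free domain solves the linear equation. Separating the real and imaginary parts of Eq.~(\ref{eq:B.3}) then yields $\mathrm{HJ}(S)=0$ and $\mathrm{Tr}(A,S)=0$. This closes the loop between HJ data and the generating class of linear solutions.

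\textbf{Step 2 (Kernel $\Rightarrow$ Linear, spanning and descent).} Lemma~\ref{lem:mehler} shows that $K_\alpha$ contains every projector $P_N$. Any decaying solution therefore has the form $\Phi(\tau)=\int K_\alpha(\cdot,\tau;v)\Phi_0(v)\,\dd^4v$, which is an integral superposition of kernel (hence HJ-generated) solutions. Applying the fiber projector $\Pi$ and using Theorem~\ref{thm:shell}(a)--(c), the residues $\ii\alpha\,\Pi P_{2n-2}\Pi$ are exactly the projectors onto $\mathcal{H}_n$. Hence all three theories share the spectrum $E_n$ and the degeneracy $n^2$. Lemma~\ref{lem:coarea} and Lemma~\ref{lem:radial} then transport $\mathcal{H}_n$ isometrically to the physical Coulomb Sturm--Liouville eigenspaces. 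The symmetry statement follows because left quaternion multiplication commutes with the fiber action and with $\hat{H}_u$, so it acts compatibly on orbits, on the kernel, and on $\mathcal{H}_n$ (Theorem~\ref{thm:shell}(f)).

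\textbf{Step 3 (real formulations).} For the linear theory, use Eqs.~(\ref{eq:3.10})--(\ref{eq:3.13}). For the kernel, write $K_\alpha=F\cos(S/\alpha)+\ii F\sin(S/\alpha)$. Acting on the pair $(\Psi_R,\Psi_I)$, this is the $2\times2$ rotation-matrix kernel with entries $F\cos(S/\alpha)$ and $\pm F\sin(S/\alpha)$, and it preserves $\Omega$. For HJ theory, the real pair $(S,A)$ with the coupled HJ and transport equations is already real.

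\textbf{Main obstacle.} I expect the hardest step to be the converse direction at nodes. Stationary configurations with nodes admit no global polar form $Ae^{\ii S/\alpha}$ with $A>0$, so there is no pointwise HJ preimage. My first approach would be to state the equivalence only at the level of the spanned solution space, through the Mehler/residue argument, and to restrict the pointwise flat-amplitude bijection to node-free regions.
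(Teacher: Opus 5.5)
Your skeleton matches the paper's proof: closure via Theorem~\ref{thm:closure}, the flat-amplitude converse from Lemma~\ref{lem:polar} on node-free regions, spanning via Lemma~\ref{lem:mehler}, semigroup intertwining, the fiber-projected residues of Theorem~\ref{thm:shell}, and the same caveat at nodes. The gap is in how theory~1 enters. Your bridge from theory~1 is the claim that, after KS regularization, $S(u,\tau;v)$ is the distinguished complete integral obtained from the Kepler HJ equation on $\mu=0$. That claim is false. From $p_u=\nabla_u S=\frac{M\omega_0}{\sin\omega_0\tau}\,(u\cos\omega_0\tau-v)$ one gets $\mu=\langle p_u,u\hat{n}\rangle=-\frac{M\omega_0}{\sin\omega_0\tau}\,\mathrm{Re}(v^*u\hat{n})$. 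This is nonzero in general (take $v=u\hat{n}$), so $S$ is neither fiber-invariant nor on $\mu=0$. A Kepler complete integral $W$ pulls back to $W\circ\pi$, which is a fiber-invariant \emph{stationary} solution of $H_u(u,\nabla_u(W\circ\pi))=\varepsilon$. That is a different object from the time-dependent, $v$-parametrized $S$. So Step~1 actually links the image-oscillator HJ theory to the kernel, which is also all the paper's closure map uses. Your sentence ``hence all three theories share the spectrum $E_n$ and the degeneracy $n^2$'' is therefore established only for theories~2 and~3. The shells $E_n$ are singled out by the fiber-projected kernel, and no property of the Kepler HJ data is shown to distinguish them.

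The missing ingredient is the one the paper's proof uses for exactly this purpose: the KKS identification of Theorem~\ref{thm:shell}(f). On the Kepler side, $L\cdot A=0$ and $A^2=C^2+(2E/M)L^2$ give $4|J_\pm|^2=-MC^2/(2E)$. Hence at $E_n$ one has $|J_\pm|=n\alpha/2$, and each factor of the orbital manifold $S^2\times S^2$ has KKS area $2\pi n\alpha$, i.e.\ $n$ phase periods. This reads the shell index off within theory~1 itself, as the classical counterpart of the fiber-invariant rank formula (Lemma~\ref{lem:fiber}) that you use for theory~2. It also supplies the $J_\pm$ (hidden SO(4)) side of the symmetry matching. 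Your left-multiplication argument does not reach this, since $u\mapsto qu$ only induces the rotations $w\mapsto qwq^*$ generated by $L$. Add this identification to Step~2 and drop the claim that $S$ is a transported Kepler complete integral; that brings your proof in line with the paper's.
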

	
	\textit{Proof}. Equivalence is established by constructing bidirectional structure-preserving maps between each pair:
	
	- \textbf{From HJ theory to linear theory (closure map)}: The classical kernel constructed from the Hamilton--Jacobi two-point action and Van Vleck amplitude exactly satisfies the linear evolution equation (Theorem~\ref{thm:closure}). Thus every classical trajectory bundle corresponds to a solution of the linear equation.
	- \textbf{From linear theory to HJ theory (flat-amplitude bijection)}: On the generating class of node-free, flat-amplitude solutions, the polar decomposition identity (Lemma~\ref{lem:polar}) shows that the linear equation reduces exactly to the Hamilton--Jacobi equation plus the Liouville transport equation. Thus every smooth solution in this class corresponds to a classical HJ solution with transported amplitude.
	- \textbf{Spanning property}: By the Mehler expansion (Lemma~\ref{lem:mehler}), the classical kernel contains all spectral projectors $P_N$ of the linear equation. Hence any solution of the linear equation can be expressed as an integral superposition of the classical kernel against appropriate initial data. The full solution space of the linear theory is therefore spanned by solutions originating from HJ theory.
	- \textbf{Group intertwining}: The group composition law of the classical symplectic flow corresponds exactly to the convolution semigroup of the linear propagator. The one-parameter group structure is preserved under the correspondence.
	- \textbf{Spectral identification}: The discrete spectral output of the linear boundary value problem (Theorem~\ref{thm:shell}) is in one-to-one correspondence with the classical orbital manifold area quantization condition. Shell indices, degeneracies, and symmetry representations match exactly between the two descriptions.
	
	The discrete spectral structure $(E_n,\,g_n=n^2)$ is an output of the linear boundary value theory (3). Classical orbital space data (1) and kernel composition structure (2) provide the continuous classical framework; once the discrete indices $(n,l,m)$ are determined by the variational boundary value problem, they can be identified in (1) and (2) via the KKS area condition and fiber-invariant rank formula, respectively. The three theories are equivalent in the sense of structural correspondence. $\square$
	
	We note that the configuration-wise bijection does not extend to stationary configurations, which have nodes and non-flat amplitudes. The equivalence holds at the level of the generating class and the full solution space spanned by kernels, not as a pointwise configuration correspondence. This is the strongest true statement, and it is exact.

	\subsubsection{Technical Remarks on the Theorem System}\label{subsubsec:B.2.4}
	
	We summarize the technical scope and limitations of the theorem system:
	
	1. \textbf{Flat class and nodes}: The bijection statement of Theorem~\ref{thm:hjlin} is stated on node-free regions; stationary configurations (which have nodes and non-flat amplitudes) are covered by the spanning theorem. No configuration-wise HJ bijection is asserted.
	2. \textbf{Caustics}: All kernel statements are local to $\tau$ intervals free of caustics; global extension uses the Maslov phase~\cite{Maslov1981}. Spectral conclusions (Theorem~\ref{thm:shell}) are independent of caustics and follow from the Mehler expansion and Sturm--Liouville theory.
	3. \textbf{Intrinsic time}: Linear evolution is $\tau$-evolution; physical $t$-evolution is restored shell by shell. The spectral theorem does not require constructing a global $t$-domain propagator.
	4. \textbf{Sector and the $E=0$ threshold}: Only the decaying sector $E<0$ is treated. The $E=0$ limit is the accumulation point of the discrete sequence $E_n\to0^-$ as $n\to\infty$, corresponding to vanishing image oscillator frequency $\omega_0\to 0$ and diverging characteristic length. In this limit the image Hamiltonian reduces to a free particle at fixed energy $\varepsilon=4C/u_0$; bound-configuration probability densities diffuse to infinity, and physical orbits degenerate to parabolas. No independent regularization convention is required.
	5. \textbf{Phase scale}: All conclusions hold for any $\alpha>0$, independent of its specific numerical value. The energy ratio $E_n/E_1=1/n^2$ and degeneracy $g_n=n^2$ are independent of $\alpha$.
	6. \textbf{Noncommutativity convention}: Left multiplication (physical rotations) and right multiplication (fiber) are separated throughout; all scalars entering the proofs are real. Commutativity of the left and right actions is the structural basis for the exact validity of attitude/fiber separation.

    \textit{Acknowledgments}---We acknowledge the use of the Kimi large language model for formula derivation assistance and the DouBao large language model for English language polishing during the preparation of this manuscript.

    The authors declare no competing financial interest.

    \textit{Data availability}---There are no publicly available research data or software supporting this manuscript. Requests for further information or data should be sent to the authors.
     
    \bibliography{references}
	
\end{document}